\newtheorem{thm}{Theorem}[section]
\newtheorem{prop}[thm]{Proposition}
\newtheorem{lem}[thm]{Lemma}
\newtheorem{cor}[thm]{Corollary}
\theoremstyle{remark}
\newtheorem{rem}[thm]{Remark}
\newcommand{\FF}{\mathbb{F}}
\newcommand{\ZZ}{\mathbb{Z}}
\newcommand{\0}{\mathbf{0}}
\newcommand{\1}{\mathbf{1}}
\newcommand{\ww}{\omega}
\newcommand{\vv}{\omega^2}
\newcommand{\cC}{\mathcal{C}}
\newcommand{\cD}{\mathcal{D}}
\DeclareMathOperator{\rank}{rank}
\begin{document}
\title{
On the classification of quaternary optimal Hermitian LCD codes
}

\author{
Makoto Araya\thanks{Department of Computer Science,
Shizuoka University,
Hamamatsu 432--8011, Japan.
email: {\tt araya@inf.shizuoka.ac.jp}}
and 
Masaaki Harada\thanks{
Research Center for Pure and Applied Mathematics,
Graduate School of Information Sciences,
Tohoku University, Sendai 980--8579, Japan.
email: {\tt mharada@tohoku.ac.jp}.}
}

\maketitle

\begin{abstract}
We propose a method for a classification of
quaternary Hermitian LCD codes having large minimum weights.
As an example, we give a classification of
quaternary optimal Hermitian LCD codes of dimension $3$.
\end{abstract}

\section{Introduction}\label{sec:1}

Linear complementary dual (LCD for short) 
codes are codes that intersect with their dual codes
trivially.
LCD codes were introduced by Massey~\cite{Massey} and 
gave an optimum linear
coding solution for the two user binary adder channel.
Recently, much work has been done concerning LCD codes
for both theoretical and practical reasons
(see e.g.~\cite{CG}, \cite{CMTQP}, \cite{GOS}, \cite{LLGF}
and the references given therein).
For example,
if there is a quaternary Hermitian LCD $[n,k,d]$ code,
then there is a maximal entanglement 
entanglement-assisted quantum error-correcting $[[n,k,d;n-k]]$ code 
(see e.g.~\cite{LLGF}).
From this point of view,
quaternary Hermitian LCD codes
play an important role in the study of
maximal entanglement entanglement-assisted quantum error-correcting codes.
In addition, Carlet, Mesnager, Tang, Qi and Pellikaan~\cite{CMTQP}
showed that 
any code over $\FF_q$ is equivalent to some Euclidean LCD code
for $q \ge 4$ and
any code over $\FF_{q^2}$ is equivalent to some Hermitian LCD code
for $q \ge 3$,
where $\FF_q$ denotes the finite field of order $q$ and $q$ is a prime power.
This is also a motivation of our study of 
quaternary Hermitian LCD codes.

It is a fundamental problem to determine the largest minimum
weight $d_4(n,k)$ among all quaternary Hermitian LCD 
$[n,k]$ codes and classify 
quaternary optimal Hermitian LCD $[n,k,d_4(n,k)]$ codes
for a given pair $(n,k)$.
It was shown that
$d_4(n,2)=\lfloor \frac{4n}{5} \rfloor$
if  $n \equiv 1,2,3 \pmod 5$ and
$d_4(n,2)=\lfloor \frac{4n}{5} \rfloor-1$
otherwise
for $n \ge 3$~\cite{Li} and~\cite{LLGF}.
Recently, it has been shown that
$d_4(n,3)=
\lfloor \frac{16n}{21} \rfloor$ if
$n \equiv 5,9,13,17,18 \pmod{21}$ and
$d_4(n,3)=
\lfloor \frac{16n}{21} \rfloor-1$
otherwise
for $n \ge 6$~\cite{AHS} and~\cite{LLGF}.
More recently,
Ishizuka~\cite{I} has completed a classification of
quaternary optimal Hermitian LCD codes of dimension $2$.

Araya, Harada and Saito~\cite{AHS} gave some conditions on the
nonexistence of certain quaternary Hermitian LCD codes having large minimum
weights (\cite[Theorem~9]{AHS}).
The aim of this note is to propose a method for a classification of
quaternary Hermitian LCD codes having large minimum weights
by following the same line as in the proof of~\cite[Theorem~9]{AHS}.
As an example, we give a classification of
quaternary optimal Hermitian LCD $[n,3,d_4(n,3)]$ codes 
for arbitrary $n$.
We also give an alternative classification of
quaternary optimal Hermitian LCD $[n,2,d_4(n,2)]$ codes 
and a classification of
quaternary near-optimal Hermitian LCD $[n,2,d_4(n,2)-1]$ codes
for arbitrary $n$.

\section{Preliminaries}\label{sec:2}
In this section, 
we give some definitions, notations and basic results used in this
note.

We denote the finite field of order $4$
by $\FF_4=\{ 0,1,\ww , \vv  \}$, where $\omega^2 = \omega +1$.
For any element $\alpha \in \FF_{4}$, the conjugation of $\alpha$ is
defined as $\overline{\alpha}=\alpha^2$.
Throughout this note, we use the following notations.
Let $\0_{s}$ and $\1_{s}$ denote the zero vector and the all-one vector of 
length $s$, respectively.
Let $O$ denote the zero matrix of appropriate size.
Let $I_k$ denote the identity matrix of order $k$.
Let $A^T$ denote the transpose of a matrix $A$.
For a $k \times n$ matrix $A=(a_{ij})$, 
the conjugate matrix of $A$ is defined as
$\overline{A}=(\overline{a_{ij}})$.
For a positive integer $s$ and a $k \times n$ matrix $A$, 
we denote by $A^{(s)}$ 
the $k \times ns$ matrix
$
\left(
\begin{array}{cccccccc}
A & \cdots & A
\end{array}
\right).
$

A {\em quaternary} $[n,k]$ {\em code} $C$
is a $k$-dimensional vector subspace of $\FF_4^n$.
The parameters $n$ and $k$
are called the {\em length} and {\em dimension} of $C$, respectively.
A generator matrix of a quaternary $[n,k]$ code $C$ is a $k \times n$
matrix such that the rows of the matrix generate $C$.
The {\em weight}
of a vector $x \in \FF_4^n$ is
the number of non-zero components of $x$.
A vector of $C$ is called a {\em codeword} of $C$.
The minimum non-zero weight of all codewords in $C$ is called
the {\em minimum weight} of $C$. A quaternary $[n,k,d]$ code
is a quaternary $[n,k]$ code with minimum weight $d$.
Two quaternary $[n,k]$ codes $C$ and $C'$ are
{\em equivalent}, denoted $C \cong C'$,
if there is an $n \times n$ monomial matrix $P$ over $\FF_4$ with
$C' = \{ x P \mid x \in C\}$.
For any quaternary $[n,k,d]$ code, 
the Griesmer bound is given by
$
n \ge \sum_{i=0}^{k-1} \left\lceil \frac{d}{4^i}\right\rceil$.
Throughout this note, we use the following notation:
\begin{equation*}\label{eq:Gb2}
g_4(n,k)=\max\left\{d \in \mathbb{Z}_{\ge 0} ~\middle|~
n \ge \sum_{i=0}^{k-1} \left\lceil \frac{d}{4^i}\right\rceil\right\},
\end{equation*}
where $\ZZ_{\ge 0}$ denotes the set of nonnegative integers.

The {\em Hermitian dual} code $C^\perp$ of
a quaternary $[n,k]$ code $C$ is defined as:
\begin{align*}
C^\perp&=
\{x \in \FF_{4}^n \mid \langle x,y\rangle_H = 0 \text{ for all } y \in C\},
\end{align*}
where
$\langle x,y\rangle_H= \sum_{i=1}^{n} x_i \overline{y_i}$
for $x=(x_1,x_2,\ldots,x_n), y=(y_1,y_2,\ldots,y_n) \in \FF_4^n$.
A quaternary $[n,k]$ code $C$ is called
{\em Hermitian linear complementary dual}
(Hermitian LCD for short)
if $C \cap C^\perp = \{\0_n\}$.
Note that quaternary Hermitian LCD
codes are also called {\em zero radical} codes
(see e.g.~\cite{LLGF}).
Let $d_4(n,k)$ denote the largest minimum weight
among all quaternary Hermitian LCD $[n,k]$ codes.
A quaternary Hermitian LCD $[n,k,d_4(n,k)]$ code is called {\em optimal}.
In this note, we say that a quaternary Hermitian LCD $[n,k,d_4(n,k)-1]$ code
is {\em near-optimal}.
The minimum weight of the Hermitian dual code $C^\perp$ of $C$ is
called the (Hermitian) {\em dual distance} of $C$
and it is denoted by $d^\perp$.

The following characterization gives a criterion for 
quaternary Hermitian LCD codes and is analogous to~\cite[Proposition~1]{Massey}.

\begin{prop}[{\cite[Proposition~3.5]{GOS}}]
Let $C$ be a quaternary code and let $G$ be a generator matrix of $C$.
Then $C$ is a Hermitian LCD code if and only if
$G \overline{G}^T$ is nonsingular.
\end{prop}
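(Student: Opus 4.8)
The plan is to prove both implications at once by computing the $\FF_4$-dimension of the Hermitian hull $C \cap C^\perp$ in terms of the rank of $G\overline{G}^T$, and then observing that $C$ is Hermitian LCD exactly when this dimension vanishes.

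First I would fix a generator matrix $G$ of the quaternary $[n,k]$ code $C$; thus $G$ is a $k \times n$ matrix of rank $k$, and the map $\varphi\colon \FF_4^k \to C$ given by $\varphi(x) = xG$ is a linear isomorphism. Since the rows of $G$ span $C$, a codeword $xG$ with $x \in \FF_4^k$ belongs to $C^\perp$ if and only if $\langle xG, yG\rangle_H = 0$ for every $y \in \FF_4^k$. Unwinding the definition of $\langle\,\cdot\,,\,\cdot\,\rangle_H$, this condition becomes $xG\,\overline{G}^T\,\overline{y}^T = 0$ for all $y \in \FF_4^k$.

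The key step is the observation that, as $y$ runs over $\FF_4^k$, the conjugate vector $\overline{y}$ also runs over all of $\FF_4^k$, because conjugation is a bijection of $\FF_4$ and hence of $\FF_4^k$. Consequently the displayed condition is equivalent to $x\bigl(G\overline{G}^T\bigr) = \0_k$, i.e.\ $x$ lies in the left null space of the $k \times k$ matrix $M := G\overline{G}^T$. Since $\varphi$ is injective, this yields
\[
\dim_{\FF_4}\bigl(C \cap C^\perp\bigr) = \dim_{\FF_4}\{x \in \FF_4^k : xM = \0_k\} = k - \rank M .
\]
Hence $C \cap C^\perp = \{\0_n\}$ if and only if $\rank M = k$, that is, if and only if $G\overline{G}^T$ is nonsingular, which is the assertion.

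I do not anticipate a genuine obstacle here; the only point that needs care is the sesquilinear bookkeeping — tracking which argument of $\langle\,\cdot\,,\,\cdot\,\rangle_H$ carries the conjugation, and checking that replacing $y$ by $\overline{y}$ does not change the range of the universal quantifier. One may also note that the criterion is independent of the chosen generator matrix: if $G' = AG$ for an invertible $k \times k$ matrix $A$ over $\FF_4$, then $G'\overline{G'}^T = A\,(G\overline{G}^T)\,\overline{A}^T$, which is nonsingular precisely when $G\overline{G}^T$ is.
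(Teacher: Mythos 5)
Your argument is correct: identifying $C\cap C^\perp$ with the left null space of $M=G\overline{G}^T$ via the parametrization $x\mapsto xG$, and concluding $\dim_{\FF_4}(C\cap C^\perp)=k-\rank M$, is exactly the standard proof of this criterion (it is the Hermitian analogue of Massey's original argument, and is how the cited reference proves it). The paper itself does not reprove the proposition but simply quotes it from \cite{GOS}, so there is nothing to compare beyond noting that your sesquilinear bookkeeping and the remark on independence of the choice of $G$ are both sound.
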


Throughout this note, we use the above characterization 
without mentioning this.

A quaternary code $C$ is called
{\em Hermitian self-orthogonal}
if  $C \subset C^\perp$.
A quaternary code $C$ is called {\em even} if the weights
of all codewords of $C$ are even.
A quaternary code $C$ is Hermitian self-orthogonal
if and only if $C$ is even~\cite[Theorem~1]{MOSW}.
In addition, 
a quaternary code $C$ is Hermitian self-orthogonal
if and only if $G \overline{G}^T =O$
for a generator matrix $G$ of $C$.

\section{Background materials}

Let $C$ be a quaternary Hermitian LCD $[n,k,d]$ code.
Define an $[n+1,k,d]$ code $\widehat{C}$ as
$\widehat{C}=\{(x,0) \mid x \in C\}$.
The following lemma was given for binary LCD codes and ternary LCD
codes~\cite[Proposition~3]{AH-C}.
The argument can be applied to quaternary Hermitian LCD codes  trivially.

\begin{lem}[Ishizuka~\cite{I}]
\label{lem:class}
Let $\cC_{n,k,d}$ denote all equivalence classes of quaternary Hermitian LCD $[n,k,d]$
codes.
Let $\cD_{n,k,d}$ denote all equivalence classes of quaternary Hermitian LCD $[n,k,d]$
codes with dual distances $d^\perp\ge 2$.
Let $\widehat{\cC_{n-1,k,d}}$ denote 
all equivalence classes containing $\widehat{C_1},\widehat{C_2},\ldots,
\widehat{C_t}$,
where
$C_1,C_2,\ldots,C_t$ denote representatives of
$\cC_{n-1,k,d}$ and
$t=|{\cC_{n-1,k,d}}|$.
Then 
$\cC_{n,k,d}= \cD_{n,k,d} \cup \widehat{\cC_{n-1,k,d}}$.
 \end{lem}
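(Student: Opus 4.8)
The plan is to establish the two inclusions $\cD_{n,k,d}\cup\widehat{\cC_{n-1,k,d}}\subseteq\cC_{n,k,d}$ and $\cC_{n,k,d}\subseteq\cD_{n,k,d}\cup\widehat{\cC_{n-1,k,d}}$ separately. The two tools I would use throughout are the criterion that a quaternary code with generator matrix $G$ is Hermitian LCD if and only if $G\overline{G}^T$ is nonsingular, and the elementary observation that, for a quaternary code $C$, one has $d^\perp\le 1$ exactly when $C^\perp$ has a weight-one codeword, which happens exactly when some coordinate is identically zero on $C$ (indeed $\alpha e_i\in C^\perp$ with $\alpha\neq 0$ is equivalent to $y_i=0$ for every $y\in C$). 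I would also record at the outset that the three families are well defined: for a monomial matrix $P$ over $\FF_4$ one has $P\overline{P}^T=I_n$ (a diagonal entry equals $\beta\overline{\beta}=\beta^3=1$ for $\beta\in\FF_4\setminus\{0\}$), so replacing a generator matrix $G$ by $GP$ leaves $G\overline{G}^T$ unchanged, and a change of basis replaces it by a matrix of the same rank; a similar computation shows that the dual distance is an equivalence invariant as well.

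The inclusion $\cD_{n,k,d}\subseteq\cC_{n,k,d}$ is immediate. For $\widehat{\cC_{n-1,k,d}}\subseteq\cC_{n,k,d}$, let $C$ be a quaternary Hermitian LCD $[n-1,k,d]$ code with generator matrix $G$, and let $\widehat{G}$ be obtained by appending a zero column to $G$; then $\widehat{G}$ generates $\widehat{C}$, the map $x\mapsto(x,0)$ is injective and weight-preserving so $\widehat{C}$ is an $[n,k,d]$ code, and $\widehat{G}\,\overline{\widehat{G}}^T=G\overline{G}^T$ is nonsingular so $\widehat{C}$ is Hermitian LCD. Hence every class in $\widehat{\cC_{n-1,k,d}}$ lies in $\cC_{n,k,d}$.

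For the reverse inclusion, let $C$ be a quaternary Hermitian LCD $[n,k,d]$ code with $[C]\notin\cD_{n,k,d}$; then $d^\perp\le 1$, so by the observation above some coordinate $i$ is identically zero on $C$. Moving coordinate $i$ to the last position by a permutation matrix shows $C\cong\widehat{C'}$, where $C'$ is obtained from $C$ by deleting the all-zero coordinate $i$; since this coordinate contributes nothing to dimension or weight, $C'$ is an $[n-1,k,d]$ code. If $G'$ generates $C'$ and $\widehat{G'}$ is obtained by appending a zero column to $G'$, then $\widehat{G'}$ generates $\widehat{C'}\cong C$, so $\widehat{G'}\,\overline{\widehat{G'}}^T=G'\overline{G'}^T$ is nonsingular because $C$ is Hermitian LCD; thus $C'$ is a quaternary Hermitian LCD $[n-1,k,d]$ code, $[C']\in\cC_{n-1,k,d}$, and therefore $[C]\in\widehat{\cC_{n-1,k,d}}$. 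Together with the preceding paragraph, this yields the claimed equality.

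I do not anticipate a real obstacle: the argument is powered entirely by the identity $\widehat{G}\,\overline{\widehat{G}}^T=G\overline{G}^T$ together with the dictionary between weight-one codewords of $C^\perp$ and all-zero coordinates of $C$. The only points worth a moment's attention are the well-definedness remarks of the first paragraph (that equivalence preserves both the Hermitian LCD property and the dual distance) and, if full generality is wanted, the degenerate case $C=\FF_4^n$ (then $k=n$, $d=1$, $C^\perp=\{\0_n\}$), which is placed in $\cD_{n,k,d}$ by the usual convention $d^\perp=\infty$ for the zero code.
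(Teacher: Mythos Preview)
Your argument is correct and is exactly the standard one the paper has in mind: the lemma is stated without proof here, attributed to Ishizuka (with the binary/ternary analogue in \cite{AH-C}), and your two-inclusion proof via $\widehat{G}\,\overline{\widehat{G}}^T=G\overline{G}^T$ together with the dictionary ``$d^\perp\le 1$ $\Leftrightarrow$ some coordinate is identically zero'' is precisely that trivial adaptation. Nothing further is needed.
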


For a classification of quaternary Hermitian LCD $[n,k,d]$ codes,
by the above lemma, it is sufficient to consider
a classification of quaternary Hermitian LCD $[n,k,d]$ codes
with dual distances $d^\perp \ge 2$.

According to~\cite{LLGF},
we define the $k \times {(\frac{4^k-1}{3})}$ $\mathbb{F}_4$-matrices $S_{k}$
by inductive constructions as follows:
\begin{align*}
S_{1}&=
\begin{pmatrix}
1
\end{pmatrix}, \\
S_{k}&=
\begin{pmatrix}
S_{k-1} & \0_{\frac{4^{k-1}-1}{3}}^T & S_{k-1} & S_{k-1} & S_{k-1}\\
\0_{\frac{4^{k-1}-1}{3}} & 1
& \1_{\frac{4^{k-1}-1}{3}}  & \omega\1_{\frac{4^{k-1}-1}{3}}  
& \omega^2\1_{\frac{4^{k-1}-1}{3}} 
\end{pmatrix} \text{ if } k \ge 2.
\end{align*}
The matrix $S_{k}$ is a generator matrix of the quaternary simplex
$[\frac{4^k-1}{3},k,4^{k-1}]$ code.
It is known that the quaternary simplex $[\frac{4^k-1}{3},k,4^{k-1}]$ 
code is a constant weight code.
More precisely, the code contains codewords of weights
$0$ and $4^{k-1}$ only.
Thus,
for $k \ge 2$, the quaternary simplex 
$[\frac{4^k-1}{3},k,4^{k-1}]$ code is even.
By~\cite[Theorem~1]{MOSW},
the quaternary simplex $[\frac{4^k-1}{3},k,4^{k-1}]$ code is
Hermitian self-orthogonal for $k \ge 2$.

Let $h_{k,i}$ be the $i$-th column of the
$k \times {(\frac{4^k-1}{3})}$ $\mathbb{F}_4$-matrix $S_{k}$.
For a vector $m=(m_1,m_2,\ldots,m_{{\frac{4^k-1}{3}}}) \in \mathbb{Z}_{\ge 0}^{{\frac{4^k-1}{3}}}$, we define a $k \times \sum_{i=1}^{{\frac{4^k-1}{3}}}m_i$ $\mathbb{F}_4$-matrix $G_{k}(m)$, which consists of $m_i$ columns $h_{k,i}$ for each $i$ as follows:
\begin{equation}\label{eq:Gqkm}
G_{k}(m)=
\left(
 h_{k,1} \cdots h_{k,1} h_{k,2} \cdots h_{k,2}
\cdots h_{k,{\frac{4^k-1}{3}}} \cdots h_{k,{\frac{4^k-1}{3}}}
\right).
\end{equation}
Here $m_i=0$ means that no column of $G_{k}(m)$ is $h_{k,i}$.
Throughout this note,
we denote by $C_{k}(m)$ the quaternary code with generator matrix $G_{k}(m)$.

\begin{rem}\label{rem}
 By considering all vectors $m \in \ZZ_{\ge 0}^{{\frac{4^k-1}{3}}}$
with $n=\sum_{i=1}^{{\frac{4^k-1}{3}}}m_i$,
it is possible to find
representatives of all equivalence classes of 
quaternary $[n,k]$ codes with dual distances $d^\perp \ge 2$
as $C_{k}(m)$.
\end{rem}

The following lemma was given for $k =2,3,4$ in~\cite{LLGF}.
The argument can be applied to arbitrary $k$ trivially.

\begin{lem}\label{lem:190228-1}
Suppose that $k \ge 2$ and $s$ is a positive integer.
Let $m=(m_1,m_2,\ldots,m_{{\frac{4^k-1}{3}}})$ be a vector of
 $\mathbb{Z}_{\ge 0}^{{\frac{4^k-1}{3}}}$
 with $n=\sum_{i=1}^{{\frac{4^k-1}{3}}}m_i$.
If $C_{k}(m)$ is a quaternary Hermitian LCD
$[n,k,d]$
code, then the quaternary code $C$ with generator matrix
\[
\left(\begin{array}{cc}
 S_{k}^{(s)} &G_{k}(m)
       \end{array}\right)
\]
 is a quaternary Hermitian LCD $[n+{\frac{4^k-1}{3}} s,k,
 d+4^{k-1}s]$ code.
\end{lem}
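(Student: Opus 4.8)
The plan is to work with the generator matrix $G=\left(\begin{array}{cc} S_{k}^{(s)} & G_{k}(m)\end{array}\right)$ directly and verify the three things that make $C$ a Hermitian LCD $[n+\frac{4^k-1}{3}s,k,d+4^{k-1}s]$ code: the correct length, the Hermitian LCD property via the nonsingularity criterion, and the minimum weight. The length is immediate, since $S_k^{(s)}$ contributes $\frac{4^k-1}{3}s$ columns and $G_k(m)$ contributes $n=\sum_i m_i$ columns. That the rank is $k$ (so the dimension is $k$) will follow once we know $C$ is Hermitian LCD, because a Hermitian LCD code generated by a $k\times N$ matrix necessarily has rank $k$; alternatively it is clear since $S_k$ already has rank $k$.

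For the Hermitian LCD property, I would compute $G\overline{G}^T$ and use Proposition (\cite[Proposition~3.5]{GOS}). Block-multiplying,
\[
G\overline{G}^T
= S_{k}^{(s)}\,\overline{S_{k}^{(s)}}^{\,T}
+ G_{k}(m)\,\overline{G_{k}(m)}^{\,T}
= s\,S_{k}\overline{S_{k}}^{\,T}
+ G_{k}(m)\,\overline{G_{k}(m)}^{\,T}.
\]
Here the first equality is the block decomposition of $G\overline{G}^T$ into a sum over the two column-blocks, and the second uses that $A^{(s)}\overline{A^{(s)}}^{\,T}=s\,A\overline{A}^{\,T}$ for any matrix $A$ (the $s$ copies of $A$ just add their contributions, and $s\cdot 1$ is computed in $\FF_4$, i.e. reduced mod $2$). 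Since $k\ge 2$, the quaternary simplex code generated by $S_k$ is Hermitian self-orthogonal, so $S_k\overline{S_k}^{\,T}=O$, and therefore $G\overline{G}^T=G_k(m)\overline{G_k(m)}^{\,T}$. By hypothesis $C_k(m)$ is Hermitian LCD, so $G_k(m)\overline{G_k(m)}^{\,T}$ is nonsingular; hence $G\overline{G}^T$ is nonsingular and $C$ is Hermitian LCD.

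For the minimum weight, let $x\in\FF_4^k$ be nonzero and consider the codeword $xG=\left(\begin{array}{cc} x S_k^{(s)} & x G_k(m)\end{array}\right)$. Since the simplex code is a constant-weight code with nonzero codewords all of weight $4^{k-1}$, the vector $xS_k$ has weight exactly $4^{k-1}$, so $xS_k^{(s)}$ has weight exactly $4^{k-1}s$; meanwhile $xG_k(m)$ is a nonzero codeword of $C_k(m)$ (nonzero because $S_k$ has rank $k$, so $xS_k\ne\0$ already forces $x\ne\0$, and more to the point $xG_k(m)$ has weight $\ge d$ — one should note $C_k(m)$ has dimension $k$ so every nonzero $x$ gives a nonzero codeword of weight $\ge d$). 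Adding weights across the two disjoint coordinate blocks gives $\wt(xG)=4^{k-1}s+\wt(xG_k(m))\ge 4^{k-1}s+d$, so the minimum weight of $C$ is at least $d+4^{k-1}s$. For the reverse inequality, pick $x$ attaining the minimum weight $d$ of $C_k(m)$; then $\wt(xG)=4^{k-1}s+d$, so equality holds.

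The only genuinely delicate point is the bookkeeping in $A^{(s)}\overline{A^{(s)}}^{\,T}=s\,A\overline{A}^{\,T}$ and the observation that $s\,S_k\overline{S_k}^{\,T}$ vanishes because $S_k\overline{S_k}^{\,T}$ itself is already $O$ for $k\ge 2$ (this is where the hypothesis $k\ge 2$ enters, via the Hermitian self-orthogonality of the simplex code recorded just before Remark~\ref{rem}); for $k=1$ the simplex code is not self-orthogonal and the statement's restriction to $k\ge 2$ is exactly what is needed. I expect the main obstacle to be nothing more than stating these two facts cleanly; the rest is routine.
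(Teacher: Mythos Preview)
Your proof is correct and is exactly the direct verification the paper is alluding to: the paper does not spell out a proof of this lemma but simply cites \cite{LLGF} for $k=2,3,4$ and remarks that the argument extends trivially to arbitrary $k$. Your use of $S_k\overline{S_k}^{\,T}=O$ (from the Hermitian self-orthogonality of the simplex code for $k\ge 2$) together with the nonsingularity criterion, and the constant-weight property of the simplex code for the minimum-weight computation, is precisely the intended route.
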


The following lemma was given for $k \ge 3$~\cite[Lemma~7]{AHS}.
The argument can be applied to $k=2$ trivially.

\begin{lem}\label{lem:mi-bound}
Suppose that $k \ge 2$.
Let $m=(m_1,m_2,\ldots,m_{{\frac{4^k-1}{3}}})$ be a vector of
$\mathbb{Z}_{\ge 0}^{{\frac{4^k-1}{3}}}$
with $n=\sum_{i=1}^{{\frac{4^k-1}{3}}}m_i$.
If the quaternary LCD $[n,k]$ code $C_{k}(m)$
has minimum weight at least $d$, then
\begin{equation}\label{eq:mi}
 4d-3n \le m_i \le n-\frac{4^{k-1}-1}{3 \cdot 4^{k-2}}d,
\end{equation}
for each $i \in \{1,2,\ldots,{\frac{4^k-1}{3}}\}$.
\end{lem}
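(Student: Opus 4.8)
The plan is to recast the statement as a double-counting problem in the projective space $PG(k-1,4)$. The columns $h_{k,1},\dots,h_{k,(4^k-1)/3}$ of the simplex generator matrix $S_{k}$ form a complete set of representatives of the points of $PG(k-1,4)$; write $p_j$ for the point spanned by $h_{k,j}$, so that $m=(m_1,\dots,m_{(4^k-1)/3})$ records a multiset of $n$ points with multiplicities. For a nonzero row vector $x\in\FF_4^k$ the codeword $xG_{k}(m)$ takes the value $x\cdot h_{k,j}$ on each of the $m_j$ coordinates coming from $h_{k,j}$, hence vanishes precisely on the coordinates whose defining point lies in the hyperplane $H_x=\{v\in\FF_4^k \mid x\cdot v=0\}$, so that
\[
\wt\bigl(xG_{k}(m)\bigr)=n-\sum_{p_j\in H_x}m_j .
\]
Because $C_{k}(m)$ has dimension $k$, the points $p_j$ with $m_j>0$ span $\FF_4^k$, so every nonzero $x$ gives a nonzero codeword and $x\mapsto H_x$ hits every hyperplane of $PG(k-1,4)$. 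Thus the hypothesis that the minimum weight is at least $d$ is equivalent to saying that, for \emph{every} hyperplane $H$ of $PG(k-1,4)$,
\[
\sum_{p_j\in H}m_j\le n-d \qquad\text{equivalently}\qquad \sum_{p_j\notin H}m_j\ge d .
\]
(Note that only the dimension and the minimum weight are used, not the LCD property.)

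For the upper bound on $m_i$ I would sum the left inequality over all $\tfrac{4^{k-1}-1}{3}$ hyperplanes $H$ containing the fixed point $p_i$ and count the contribution of each $p_j$: the point $p_i$ lies in all of them, while for $j\ne i$ the point $p_j$ lies in exactly $\tfrac{4^{k-2}-1}{3}$ of them, namely those containing the line $\langle p_i,p_j\rangle$. Using $\sum_j m_j=n$ this gives
\[
\frac{4^{k-1}-1}{3}\,m_i+\frac{4^{k-2}-1}{3}\,(n-m_i)\ \le\ \frac{4^{k-1}-1}{3}\,(n-d),
\]
and since the coefficient of $m_i$ on the left is $\tfrac{4^{k-1}-1}{3}-\tfrac{4^{k-2}-1}{3}=4^{k-2}$, solving for $m_i$ yields $m_i\le n-\tfrac{4^{k-1}-1}{3\cdot4^{k-2}}\,d$.

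For the lower bound I would instead sum the right inequality $\sum_{p_j\notin H}m_j\ge d$ over the $4^{k-1}$ hyperplanes $H$ that avoid $p_i$. Here $p_i$ lies outside all $4^{k-1}$ of them, while for $j\ne i$ the point $p_j$ lies outside exactly $3\cdot4^{k-2}$ of them, by an inclusion--exclusion count of the hyperplanes avoiding both $p_i$ and $p_j$ (that is, $\tfrac{4^k-1}{3}-2\cdot\tfrac{4^{k-1}-1}{3}+\tfrac{4^{k-2}-1}{3}$). Hence
\[
4^{k-1}\,m_i+3\cdot4^{k-2}\,(n-m_i)\ \ge\ 4^{k-1}\,d ,
\]
i.e.\ $4^{k-2}(m_i+3n)\ge 4^{k-1}d$, which is exactly $m_i\ge 4d-3n$.

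I do not expect a genuine obstacle: the only ingredients are the standard counts of hyperplanes of $PG(k-1,4)$ through $0$, $1$ and $2$ prescribed points (namely $4^{k-1}$, $\tfrac{4^{k-1}-1}{3}$ and $\tfrac{4^{k-2}-1}{3}$), which I would record once at the outset, together with the dictionary between codewords of $C_{k}(m)$ and hyperplanes. The only thing needing care is the degenerate case $k=2$, where a ``hyperplane'' is a single point and $\tfrac{4^{k-2}-1}{3}=0$: all four displayed inequalities still hold (the upper one collapses to $m_i\le n-d$ and the lower one to $m_i\ge 4d-3n$), so the argument is uniform in $k\ge2$.
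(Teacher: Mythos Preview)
Your proof is correct. The paper itself does not prove this lemma: it merely records that the case $k\ge 3$ is \cite[Lemma~7]{AHS} and that the argument extends trivially to $k=2$. What you have written is a clean, self-contained projective--geometric double count that handles all $k\ge 2$ uniformly, and your verification that the degenerate case $k=2$ (where $\tfrac{4^{k-2}-1}{3}=0$) causes no trouble is exactly the check the paper asserts without detail.

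A minor stylistic remark: the upper bound also drops out of the Griesmer bound applied to the shortened code on the $n-m_i$ coordinates whose column is not $h_{k,i}$ (the codewords $xG_k(m)$ with $x\cdot h_{k,i}=0$ form a $[\,n-m_i,\,k-1,\,\ge d\,]$ code, so $n-m_i\ge\sum_{j=0}^{k-2}\lceil d/4^j\rceil\ge\tfrac{4^{k-1}-1}{3\cdot 4^{k-2}}d$), which is closer in spirit to how such bounds are usually phrased in the cited literature. Your averaging argument over hyperplanes is equivalent and arguably more transparent, and it has the advantage of giving both inequalities by the same mechanism.
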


The following lemma was given for binary LCD codes and ternary LCD
codes~\cite[Lemmas~4.3 and 4.4]{AHS2}.
By following the same line as in the proof of~\cite[Lemmas~4.3 and 4.4]{AHS2},
we have the following lemma trivially.

\begin{lem}\label{lem:Eq}
Suppose that $\ell \ge 1$ and $k \ge 2$.
Let $C$ and $C'$ be quaternary Hermitian LCD $[n,k]$ codes with
dual distances $d(C^\perp) \ge 2$ and $d({C'}^\perp) \ge 2$.
Suppose that
there are quaternary Hermitian LCD $[n,k]$ codes $D$ and $D'$ satisfying
the following conditions:
\begin{enumerate}
\item $C \cong D$ and $C' \cong D'$,
\item $D$ and $D'$ have generator matrices
\[
G=
\left(\begin{array}{cccc}
 &S_{k}^{(\ell)} &  G_0 &
\end{array}\right) \text{ and }
G'=
\left(\begin{array}{ccccc}
&S_{k}^{(\ell)} &  G'_0 &
\end{array}\right),
\]
where $G_0$ and $G'_0$ are generator matrices of some quaternary Hermitian LCD
$[n-\frac{(4^k-1)\ell}{3},k]$ codes $C_0$ and $C'_0$,
respectively.
\end{enumerate}
Then $C \cong C'$ if and only if $C_0 \cong C'_0$.
\end{lem}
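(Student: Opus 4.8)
The plan is to exploit the "peeling" structure of the prefix $S_k^{(\ell)}$ together with the rigidity of how a simplex-code block can sit inside a Hermitian LCD code whose dual distance is at least $2$. One direction is immediate: if $C_0 \cong C'_0$ via a monomial matrix $P_0$, then the block-diagonal monomial matrix $\operatorname{diag}(I_{(4^k-1)\ell/3}, P_0)$ carries $G$ to a generator matrix of $C'$ (after noting it fixes the $S_k^{(\ell)}$ block columnwise), so $D \cong D'$ and hence $C \cong C'$. The content is the converse.

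For the converse, suppose $C \cong C'$, hence $D \cong D'$, so there is an $n \times n$ monomial matrix $P$ over $\FF_4$ with $D' = \{xP \mid x \in D\}$; equivalently there is an invertible $k \times k$ matrix $M$ with $MGP = G'$. The key step is to show that $P$ must (up to a monomial automorphism of the simplex code, which is harmless) map the set of $\frac{(4^k-1)\ell}{3}$ columns forming the $S_k^{(\ell)}$ block of $G$ to the corresponding block of $G'$, and thus restricts to a monomial equivalence between $C_0$ and $C'_0$. To do this I would first reduce to $\ell = 1$: the $S_k^{(\ell)}$ block is, up to column permutation, the code $C_k(m^*)$ with $m^* = (\ell,\ell,\dots,\ell)$, and the argument below localizes each "copy" of $S_k$ anyway. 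Then the combinatorial heart is a characterization of the multiset of columns of a generator matrix: the columns of $S_k$ represent the $\frac{4^k-1}{3}$ distinct points of $PG(k-1,4)$ (each nonzero vector up to scalar), and any monomial transformation permutes these points via a $PGL$ action while rescaling. The block $S_k^{(\ell)}$ in $G$ is exactly the sub-multiset in which every projective point occurs with multiplicity divisible by $\ell$ — more precisely, occurs with multiplicity $\geq \ell$ in a way one can strip off. I would argue, following~\cite[Lemmas~4.3 and 4.4]{AHS2}, that because $D$ is Hermitian LCD with $d^\perp \geq 2$ (so no column of $G$ is zero and no two columns are scalar multiples beyond what is recorded in the multiplicities), the decomposition of the column multiset of $G$ into "a full copy of every projective point, repeated $\ell$ times" plus "the remaining columns $G_0$" is canonical: the $\ell$-fold-simplex part is the unique maximal sub-multiset that is a union of $\ell$ complete projective point sets and whose removal still leaves a generator matrix of a Hermitian LCD code of the same dimension. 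Hence $P$ (composed with a suitable element of the monomial automorphism group of $S_k^{(\ell)}$, which exists and acts transitively enough) sends the simplex block to the simplex block and therefore induces a monomial map $P_0$ with $C'_0 = \{x P_0 \mid x \in C_0\}$, i.e. $C_0 \cong C'_0$.

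The main obstacle I anticipate is the bookkeeping in that canonicity claim: one must be careful that an equivalence $P$ need not respect the given \emph{ordered} block structure, only the \emph{unordered} column multiset, and that the simplex block could a priori "trade" columns with $G_0$ if $G_0$ itself happened to contain extra complete copies of projective points. The resolution is that one does not need $P$ to fix the literal $S_k^{(\ell)}$ columns — one only needs that after absorbing such trades into the automorphism group of $S_k^{(\ell)}$ and into a relabeling of $C_0$ versus $C'_0$, the residual codes are still equivalent; this is exactly the mechanism by which~\cite[Lemmas~4.3 and 4.4]{AHS2} are phrased (the statement is about equivalence of $C_0$ and $C'_0$, not equality), so the quaternary Hermitian case goes through verbatim once the simplex code's parameters ($\frac{4^k-1}{3}$ points, constant weight $4^{k-1}$, Hermitian self-orthogonal for $k \geq 2$) are substituted. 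I would write the proof by first stating the reduction to the column-multiset level, then invoking the projective-geometry description of monomial equivalence, and finally quoting the structure of~\cite[Lemmas~4.3 and 4.4]{AHS2} with the substitutions made explicit.
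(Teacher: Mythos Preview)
The paper gives no proof of this lemma; it simply remarks that the argument of \cite[Lemmas~4.3 and 4.4]{AHS2} carries over verbatim to the quaternary Hermitian setting. Your plan ultimately does the same, so in that sense you and the paper agree. That said, your sketch is more complicated than necessary and contains a couple of slips worth flagging.

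For the easy direction, the block-diagonal monomial matrix with blocks $I$ and $P_0$ does \emph{not} in general send $D$ to $D'$: if $G'_0=MG_0P_0$ with $M\neq I_k$, then $(S_k^{(\ell)}\mid G_0P_0)$ and $(S_k^{(\ell)}\mid G'_0)$ need not have the same row space. The fix is the observation you almost reach later: since the columns of $S_k$ represent every point of $PG(k-1,4)$, for any invertible $M$ there is a monomial $Q$ with $MS_k=S_kQ$, and then $MG$ followed by the block-diagonal monomial with blocks $\tilde Q^{-1}$ and $P_0$ yields $G'$.

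For the converse, the ``canonicity'' discussion is a detour, and your claim that the $\ell$-fold simplex part is the \emph{unique maximal} such sub-multiset is false (nothing prevents $G_0$ from containing further full simplex blocks) and, more to the point, unnecessary. The clean argument is pure multiset arithmetic: $d^\perp\ge2$ means only that no column is zero (not the stronger statement you wrote), so the column multiset of $G$ in $PG(k-1,4)$ is well defined and equals the constant-$\ell$ multiset plus $\mathcal{M}(G_0)$. A monomial equivalence between $D$ and $D'$ induces an element of $PGL(k,4)$ carrying $\mathcal{M}(G)$ to $\mathcal{M}(G')$; since every element of $PGL(k,4)$ fixes the constant-$\ell$ multiset, subtracting it from both sides gives $\mathcal{M}(G_0)\mapsto\mathcal{M}(G'_0)$, hence $C_0\cong C'_0$. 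No trading bookkeeping is required.
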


\section{Characterizations of quaternary Hermitian LCD codes}

In the rest of this note, we use the following notation:
\begin{equation}\label{eq:r}
r_{n,k,d}=4^{k-1}n-{\frac{4^k-1}{3}} d, 
\end{equation}
for a given set of parameters $n,k,d$.

\subsection{Theorem~\ref{thm:main} and its proof}

By following the same line as in the proof of Theorem~9 in~\cite{AHS},
we have the following theorem.

\begin{thm}\label{thm:main}
Suppose that $4d-3n \ge 1$ and $4r_{n,k,d} \ge k \ge 2$, where
$r_{n,k,d}$ is the integer defined in~\eqref{eq:r}.
Then there is a one-to-one correspondence between
equivalence classes of quaternary Hermitian LCD $[n,k,d]$ codes
with dual distances $d^\perp \ge 2$ 
and 
equivalence classes of quaternary Hermitian LCD
$[4r_{n,k,d},k,3r_{n,k,d}]$ codes
with dual distances $d^\perp \ge 2$.
\end{thm}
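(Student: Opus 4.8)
The plan is to set up an explicit bijection using the matrix constructions $S_k^{(s)}$ and $G_k(m)$ from Section~3, exploiting the fact (Remark~\ref{rem}) that every quaternary $[n,k]$ code with $d^\perp \ge 2$ is equivalent to some $C_k(m)$. Given such a Hermitian LCD code $C$ with parameters $[n,k,d]$, write it as $C_k(m)$ for a suitable $m \in \ZZ_{\ge 0}^{(4^k-1)/3}$ with $\sum_i m_i = n$. The key numerical observation is that $\sum_i m_i h_{k,i}$-type counting gives $\sum_i m_i = n$ and, because each row of $S_k$ has weight $4^{k-1}$ (constant-weight simplex code) while each column is nonzero, a weighted column count yields $\sum_i m_i = n$ and a "per-symbol" identity forcing the quantity $\min_i m_i$ to be controlled: in fact one shows $r_{n,k,d} = 4^{k-1}n - \frac{4^k-1}{3}d$ equals $\sum_i (m_i - (4d - 3n))$ up to the normalization in Lemma~\ref{lem:mi-bound}, so the hypothesis $4d - 3n \ge 1$ together with the lower bound in \eqref{eq:mi} guarantees every $m_i \ge 4d - 3n \ge 1$. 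Subtracting $1$ from a full copy of the column multiset — i.e. peeling off one factor $S_k$ — is exactly the operation in Lemma~\ref{lem:190228-1} run in reverse.

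Concretely, first I would show the \emph{forward} direction: if $C \cong C_k(m)$ with all $m_i \ge 4d-3n$, set $\ell = 4d - 3n \ge 1$ and $m' = m - \ell \1$, which is a nonnegative vector. Then $C_k(m)$ has generator matrix $\left(\begin{array}{cc} S_k^{(\ell)} & G_k(m') \end{array}\right)$ after a column permutation, and $C_k(m')$ has length $n - \frac{4^k-1}{3}\ell$, dimension $k$, and — by the reverse of Lemma~\ref{lem:190228-1}, which holds because removing copies of $S_k$ from a generator matrix whose Gram matrix is nonsingular keeps it nonsingular (the simplex code is Hermitian self-orthogonal for $k \ge 2$, so $S_k \overline{S_k}^T = O$ and the Gram matrix is unchanged by deleting $S_k$ blocks) — is again Hermitian LCD. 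One then checks the minimum weight of $C_k(m')$ is exactly $3r_{n,k,d}$ and its length is exactly $4r_{n,k,d}$: this is the routine arithmetic $n - \frac{4^k-1}{3}(4d-3n) = 4(4^{k-1}n - \frac{4^k-1}{3}d) = 4r_{n,k,d}$ and the corresponding weight computation using that the minimum weight drops by exactly $4^{k-1}\ell$ when $\ell$ copies of $S_k$ are removed. The condition $4r_{n,k,d} \ge k$ ensures the resulting length is large enough to actually carry a $k$-dimensional code. Conversely, given a Hermitian LCD $[4r_{n,k,d}, k, 3r_{n,k,d}]$ code $C_0$ with $d^\perp \ge 2$, Lemma~\ref{lem:190228-1} with $s = 4d-3n$ produces a Hermitian LCD $[n,k,d]$ code $\left(\begin{array}{cc} S_k^{(s)} & G_0 \end{array}\right)$, which one checks has $d^\perp \ge 2$.

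That the two maps are mutually inverse \emph{on equivalence classes} is precisely what Lemma~\ref{lem:Eq} delivers: two codes $C, C'$ of the shape $\left(\begin{array}{cc} S_k^{(\ell)} & G_0 \end{array}\right)$, $\left(\begin{array}{cc} S_k^{(\ell)} & G'_0 \end{array}\right)$ are equivalent if and only if the "cores" $C_0, C'_0$ are equivalent, so the assignment $C \mapsto C_0$ is well-defined on classes and injective, and the assignment $C_0 \mapsto \left(\begin{array}{cc} S_k^{(s)} & G_0 \end{array}\right)$ is its inverse. The one point needing care is that the forward map is well-defined at all, i.e. that \emph{any} generator matrix of a Hermitian LCD $[n,k,d]$ code with $d^\perp \ge 2$ can be brought (up to equivalence) to a form in which a full block $S_k^{(\ell)}$ with $\ell = 4d-3n$ splits off — this is where Remark~\ref{rem} and Lemma~\ref{lem:mi-bound} are essential, since they simultaneously reduce to the $C_k(m)$ model and supply the inequality $m_i \ge 4d - 3n$ for every $i$.

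I expect the main obstacle to be verifying that the lower bound $m_i \ge 4d-3n$ from Lemma~\ref{lem:mi-bound} is tight enough and correctly oriented so that $m - \ell\1 \ge \0$ with $\ell = 4d - 3n$, and then confirming the arithmetic that the residual code has \emph{exactly} the advertised parameters $[4r_{n,k,d}, k, 3r_{n,k,d}]$ rather than merely bounds — in particular that its minimum weight cannot be larger than $3r_{n,k,d}$, which requires knowing that the minimum-weight codeword of $C$ restricts to a codeword of weight exactly $d - 4^{k-1}\ell$ on the $G_k(m')$ part. A secondary subtlety is checking that $d^\perp \ge 2$ is preserved in both directions: appending $S_k^{(s)}$ can only help (no zero columns are introduced), and in the reverse direction $d^\perp \ge 2$ for $C$ forces the $m_i$'s to be positive, which is consistent with $m' \ge \0$. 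Once these points are pinned down, the bijection and its inverse follow formally from Lemmas~\ref{lem:190228-1} and~\ref{lem:Eq}.
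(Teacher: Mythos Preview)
Your proposal is correct and follows essentially the same approach as the paper's proof: reduce to the $C_k(m)$ model via Remark~\ref{rem}, invoke Lemma~\ref{lem:mi-bound} to guarantee $m_i \ge 4d-3n$ so that a block $S_k^{(4d-3n)}$ can be split off, use $S_k\overline{S_k}^T=O$ to show the residual matrix $G_0$ still generates a Hermitian LCD code of the right rank, compute its parameters via Lemma~\ref{lem:190228-1}, and finally apply Lemma~\ref{lem:Eq} to make the correspondence well-defined and bijective on equivalence classes. The subtleties you flag (exactness of the minimum weight $3r_{n,k,d}$, preservation of $d^\perp\ge 2$) are handled in the paper exactly as you anticipate, the former because the simplex code is constant-weight and the latter because every column $h_{k,i}$ is nonzero.
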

\begin{proof}
Let $C$ be a quaternary $[n,k,d]$ code
with dual distance $d^\perp \ge 2$.
Since $d^\perp \ge 2$, by Remark~\ref{rem}, 
there is a vector $m=(m_1,m_2,\ldots,m_{{\frac{4^k-1}{3}}}) \in
\mathbb{Z}_{\ge 0}^{{\frac{4^k-1}{3}}}$
such that $C \cong C_{k}(m)$ and
$n=\sum_{i=1}^{{\frac{4^k-1}{3}}}m_i$.
Since the minimum weight of $C_{k}(m)$ is $d$, we have
\[4d-3n \le m_i,\]
by Lemma~\ref{lem:mi-bound}.
Thus, the generator matrix $G_{k}(m)$ in~\eqref{eq:Gqkm}
of $C_{k}(m)$
consists of at least $4d-3n$ columns
$h_{k,i}$ for each $i \in\{1,2,\ldots,{\frac{4^k-1}{3}}\}$.
Note that $4d-3n \ge 1$ from the assumption.
Hence, we obtain a matrix $G$ of the following form:
\begin{equation}\label{eq:G}
G=
\begin{pmatrix}
& S_{k}^{(4d-3n)} &  G_0 &
\end{pmatrix},
\end{equation}
by permuting columns of $G_{k}(m)$.
 Here $G_0$ is a $k \times (n- {\frac{(4^k-1)(4d-3n)}{3}})$ matrix,
 noting that
\[
n- {\frac{(4^k-1)(4d-3n)}{3}}
= 4\left(4^{k-1}n-{\frac{4^k-1}{3}}  d\right)=4r_{n,k,d}. 
\]
Since $S_{k}\overline{S_{k}}^T=O$,
we have $G\overline{G}^T=G_0\overline{G_0}^T$.
Since $C$ is Hermitian LCD,  we have
\begin{equation}\label{eq:rank}
4r_{n,k,d}
\ge \rank(G_0) \ge \rank(G_0\overline{G_0}^T)=\rank(G\overline{G}^T)=k.
\end{equation}
Let $C_0$ be the quaternary code with generator matrix $G_0$.
It follows from~\eqref{eq:rank} that 
$C_0$ is a quaternary Hermitian $[4r_{n,k,d},k]$ LCD code.
From the assumption $k \ge 2$,
the quaternary code $C'$ with generator matrix $S_{k}^{(4d-3n)}$ is
a Hermitian self-orthogonal $[n',k,d']$ code, where
\[
n'={\frac{(4^k-1)(4d-3n)}{3}}
\text{ and }
d'=(4d-3n)4^{k-1}.
\]
By Lemma~\ref{lem:190228-1}, 
we have 
\[
d=d_0+d' \text{ and }
d_0=3\left(4^{k-1}n-{\frac{4^k-1}{3}}  d\right)=3r_{n,k,d}.\]
Hence, 
if there is a quaternary Hermitian LCD $[n,k,d]$ code $C$ with
dual distance $d^\perp \ge 2$,
then there is a quaternary Hermitian LCD $[n,k,d]$ code $C'$ such that
$C \cong C'$ and 
$C'$ has generator matrix of form~\eqref{eq:G}.
In addition,
$G_0$ is a generator matrix of some quaternary Hermitian LCD
$[4r_{n,k,d},k,3r_{n,k,d}]$ code.

Now let $C$ and $C'$ be quaternary Hermitian LCD $[n,k,d]$ codes with
dual distances $d(C^\perp) \ge 2$ and $d({C'}^\perp) \ge 2$.
By the above argument, 
there are quaternary Hermitian LCD $[n,k,d]$ codes $D$ and $D'$ satisfying
the following conditions:
\begin{enumerate}
\item $C \cong D$ and $C' \cong D'$,
\item $D$ and $D'$ have generator matrices
\[
G=
\left(\begin{array}{cccc}
&S_{k}^{(4d-3n)} &  G_0 &
\end{array}\right) \text{ and }
G'=
\left(\begin{array}{ccccc}
&S_{k}^{(4d-3n)} &  G'_0 &
\end{array}\right),
\]
where $G_0$ and $G'_0$ are generator matrices of some quaternary Hermitian LCD
$[4r_{n,k,d},k,3r_{n,k,d}]$ codes $C_0$ and $C'_0$,
respectively.
\end{enumerate}
It follows from Lemma~\ref{lem:Eq} that 
$C \cong C'$ if and only if $C_0 \cong C'_0$.
This completes the proof.
\end{proof}

The above theorem
says that for a given set of parameters $n,k,d$ 
a classification of
quaternary Hermitian LCD $[n,k,d]$ codes is obtained from
that of
quaternary Hermitian LCD $[4r_{n,k,d},k,3r_{n,k,d}]$ codes,
where $4r_{n,k,d} \le n$.

\subsection{Modification of Theorem~\ref{thm:main}}

As the next step,
by following the same line as in the proof of~\cite[Theorem~4.7]{AHS2},
we modify Theorem~\ref{thm:main}
to the form which is used easily
by adding some assumption~\eqref{eq:as} on
minimum weights for our study in Section~\ref{sec:4-2}
(Theorem~\ref{thm:main2}).

Assume that we write
\[
n=\frac{4^k-1}{3} s+t,
\]
where $s \in \ZZ_{\ge 0}$
and $t \in \{0,1,\ldots,\frac{4^k-1}{3}-1\}$.
In addition, assume the following:
\begin{equation}\label{eq:as}
\begin{split}
&\text{the minimum weight $d$ is written as}\\
&d(s,t)=4^{k-1}s+\alpha(t),\\
&\text{where $\alpha(t)$ is a constant depending on only $t$.} 
\end{split}
\end{equation}
The condition $4d-3n \ge 1$ in Theorem~\ref{thm:main}
is equivalent to that $s \ge s'_{(\frac{4^k-1}{3} s+t),k,d(s,t)}$, where
\begin{equation}\label{eq:s0}
s'_{(\frac{4^k-1}{3} s+t),k,d(s,t)}= \frac{4r_{(\frac{4^k-1}{3} s+t),k,d(s,t)}-t}{\frac{4^k-1}{3}}+1. 
\end{equation}
From~\eqref{eq:r}, we have
\begin{equation}\label{eq:r2}
\begin{split}
r_{(\frac{4^k-1}{3} s+t),k,d(s,t)}
&=4^{k-1}\left(\frac{4^k-1}{3} s+t\right)-\frac{4^k-1}{3} d(s,t)
\\
&=4^{k-1}t -\frac{4^k-1}{3} \alpha(t).
\end{split}
\end{equation}

From~\eqref{eq:s0} and~\eqref{eq:r2}, we have the following:

\begin{lem}\label{lem:r}
Both $r_{(\frac{4^k-1}{3} s+t),k,d(s,t)}$
and
$s'_{(\frac{4^k-1}{3} s+t),k,d(s,t)}$
depend on only $k,t$ and do not depend on $s$.
\end{lem}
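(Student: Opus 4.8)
The plan is to read off both claims directly from the two displayed formulas \eqref{eq:r2} and \eqref{eq:s0}, since Lemma~\ref{lem:r} is essentially a bookkeeping observation about which variables appear on the right-hand sides. First I would note that \eqref{eq:r2} already establishes the first half: it exhibits
\[
r_{(\frac{4^k-1}{3} s+t),k,d(s,t)}=4^{k-1}t-\frac{4^k-1}{3}\alpha(t),
\]
and by assumption~\eqref{eq:as} the quantity $\alpha(t)$ is a constant depending only on $t$; the factors $4^{k-1}$ and $\frac{4^k-1}{3}$ depend only on $k$. Hence the right-hand side is a function of $k$ and $t$ alone, with no occurrence of $s$, which is exactly what is asserted for $r$.

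Next I would substitute this into the definition~\eqref{eq:s0} of $s'_{(\frac{4^k-1}{3} s+t),k,d(s,t)}$. Writing it out,
\[
s'_{(\frac{4^k-1}{3} s+t),k,d(s,t)}
=\frac{4r_{(\frac{4^k-1}{3} s+t),k,d(s,t)}-t}{\frac{4^k-1}{3}}+1
=\frac{4\left(4^{k-1}t-\frac{4^k-1}{3}\alpha(t)\right)-t}{\frac{4^k-1}{3}}+1 .
\]
Every term on the far right is built only from $k$, $t$, and $\alpha(t)$, and $\alpha(t)$ itself depends only on $t$ by~\eqref{eq:as}; in particular $s$ has been eliminated. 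Therefore $s'_{(\frac{4^k-1}{3} s+t),k,d(s,t)}$ also depends only on $k$ and $t$, completing the proof.

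There is no real obstacle here — the only point requiring care is to make sure the substitution of~\eqref{eq:r2} into~\eqref{eq:s0} is legitimate, i.e.\ that the $r$ appearing inside~\eqref{eq:s0} is indeed the same quantity $r_{(\frac{4^k-1}{3}s+t),k,d(s,t)}$ that~\eqref{eq:r2} evaluates, which is immediate from the notation introduced in~\eqref{eq:r}. One could also phrase the whole argument as a single sentence, but spelling out the two substitutions makes transparent that the apparent $s$-dependence coming through $n=\frac{4^k-1}{3}s+t$ and $d=d(s,t)=4^{k-1}s+\alpha(t)$ cancels, because $r$ combines $n$ and $d$ in precisely the linear combination $4^{k-1}n-\frac{4^k-1}{3}d$ whose $s$-coefficient is $4^{k-1}\cdot\frac{4^k-1}{3}-\frac{4^k-1}{3}\cdot 4^{k-1}=0$.
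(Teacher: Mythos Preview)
Your proof is correct and follows exactly the paper's approach: the paper simply states that the lemma is immediate from~\eqref{eq:s0} and~\eqref{eq:r2}, and you have spelled out this deduction in full. Your additional remark about the vanishing $s$-coefficient in $4^{k-1}n-\frac{4^k-1}{3}d$ is a nice explanation of why the cancellation occurs, though the paper does not include it.
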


From~\eqref{eq:s0}, we have
\begin{equation}\label{eq:41}
\begin{split}
 4r_{(\frac{4^k-1}{3} s+t),k,d(s,t)}
 &=\frac{4^k-1}{3} \left(s'_{(\frac{4^k-1}{3}
s+t),k,(4^{k-1}s+\alpha(t))}-1 \right)+t.
\end{split}
\end{equation}
From~\eqref{eq:r2} and~\eqref{eq:41}, we have
\begin{equation}\label{eq:42}
\begin{split}
3r_{(\frac{4^k-1}{3} s+t),k,d(s,t)}&=
3\left(4^{k-1}t -\frac{4^k-1}{3} \alpha(t)\right)\\
&=\frac{3}{4}
\left(\frac{4^k-1}{3} \left(s'_{(\frac{4^k-1}{3}
 s+t),k,(4^{k-1}s+\alpha(t))}-1\right)+t
 \right)\\ 
&=4^{k-1}\left(s'_{(\frac{4^k-1}{3} s+t),k,(4^{k-1}s+\alpha(t))}-1
 \right)\\ 
&\qquad +\frac{1}{4}\left(-  {\left(s'_{(\frac{4^k-1}{3}
 s+t),k,(4^{k-1}s+\alpha(t))}-1\right)}+3t \right)\\
&= 4^{k-1} \left(s'_{(\frac{4^k-1}{3}
 s+t),k,(4^{k-1}s+\alpha(t))}-1 \right)+\alpha(t).
\end{split}
\end{equation}
By Lemma~\ref{lem:r}, \eqref{eq:41} and~\eqref{eq:42},
we have the following:

\begin{thm}\label{thm:main2}
Write $n=\frac{4^k-1}{3} s+t \ge k$,
where $s \in \ZZ_{\ge 0}$
and $t \in \{0,1,\ldots,\frac{4^k-1}{3}-1\}$.
Assume that $d$ is written as $d(s,t)=4^{k-1}s+\alpha(t)$,
where $\alpha(t)$ is a constant depending on $t$.
Let $r$ denote 
the integer $r_{(\frac{4^k-1}{3} s+t),k,d(s,t)}$ defined in~\eqref{eq:r}.
Let $s'$ denote
the integer $s'_{(\frac{4^k-1}{3} s+t),k,d(s,t)}$ defined in~\eqref{eq:s0}.
Suppose that $4r \ge k \ge 2$.
Then there is a one-to-one correspondence between
equivalence classes of quaternary Hermitian LCD codes
with dual distances $d^\perp \ge 2$ and parameters
\begin{equation*} 
[4r,k,3r]
=
\left[
 \frac{4^k-1}{3} \left(s'-1 \right)+t,k,
4^{k-1} \left(s'-1 \right)+\alpha(t)
\right]
\end{equation*}
and equivalence classes of quaternary Hermitian LCD code
with dual distances $d^\perp \ge 2$ and parameters
\begin{equation*} 
\left[\frac{4^k-1}{3} s+t,k,4^{k-1}s+\alpha(t)\right],
\end{equation*}
for every integer
$s \ge s'$.
\end{thm}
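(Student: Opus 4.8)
The plan is to obtain Theorem~\ref{thm:main2} as a direct specialization of Theorem~\ref{thm:main}: apply Theorem~\ref{thm:main} with $n=\frac{4^k-1}{3}s+t$ and $d=d(s,t)=4^{k-1}s+\alpha(t)$, check that its two hypotheses are satisfied under the standing assumptions, and then rewrite the resulting parameters $4r$ and $3r$ into the shape displayed in the statement using the identities \eqref{eq:s0}, \eqref{eq:r2}, \eqref{eq:41}, \eqref{eq:42} together with Lemma~\ref{lem:r}.

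First I would fix an integer $s\ge s'$ and verify the hypotheses of Theorem~\ref{thm:main} for this $n$ and $d$. The condition $4d-3n\ge 1$ is, by the computation recorded just before \eqref{eq:s0}, equivalent to $s\ge s'_{(\frac{4^k-1}{3}s+t),k,d(s,t)}$; by Lemma~\ref{lem:r} this lower bound is precisely $s'$, so the condition holds by assumption. The condition $4r\ge k\ge 2$ is part of the hypothesis of Theorem~\ref{thm:main2}, and again by Lemma~\ref{lem:r} the integer $r$ coincides with $r_{(\frac{4^k-1}{3}s+t),k,d(s,t)}$ independently of $s$, so this is exactly the inequality $4r_{n,k,d}\ge k\ge 2$ required by Theorem~\ref{thm:main}.

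Having checked the hypotheses, Theorem~\ref{thm:main} yields a one-to-one correspondence between the equivalence classes of quaternary Hermitian LCD $[n,k,d]$ codes with $d^\perp\ge 2$ and the equivalence classes of quaternary Hermitian LCD $[4r_{n,k,d},k,3r_{n,k,d}]$ codes with $d^\perp\ge 2$. It then remains only to identify the target parameters: \eqref{eq:41} gives $4r=\frac{4^k-1}{3}(s'-1)+t$ and \eqref{eq:42} gives $3r=4^{k-1}(s'-1)+\alpha(t)$, which is exactly the parameter set in the statement. Finally, since Lemma~\ref{lem:r} shows that neither $r$ nor $s'$ depends on $s$, this target family of equivalence classes is literally the same set for every $s\ge s'$; carrying out the argument for each such $s$ gives the asserted correspondence.

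The argument is pure bookkeeping rather than a new idea, so the only real risk is in the bookkeeping. The point to handle with care is that the translation ``$4d-3n\ge 1\iff s\ge s'$'' genuinely depends on the special form $d=4^{k-1}s+\alpha(t)$, which via \eqref{eq:r2} makes $r$ — and hence $s'$ — free of $s$; without assumption~\eqref{eq:as} the two sides of the claimed correspondence would fail to be $s$-independent. A secondary check is that Theorem~\ref{thm:main} already provides exactly the stated correspondence, with the restriction $d^\perp\ge 2$ imposed on both sides, so no further existence or Griesmer-type input is needed at this stage.
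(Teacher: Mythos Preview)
Your proposal is correct and follows essentially the same approach as the paper: the paper likewise derives Theorem~\ref{thm:main2} directly from Theorem~\ref{thm:main} by translating the hypothesis $4d-3n\ge 1$ into $s\ge s'$, invoking Lemma~\ref{lem:r} to see that $r$ and $s'$ are independent of $s$, and then reading off the parameter identities from \eqref{eq:41} and \eqref{eq:42}. There is nothing to add.
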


The above theorem says that for a given set of parameters
$k,t,\alpha(t)$
a classification of
quaternary Hermitian LCD
$[ \frac{4^k-1}{3} \left(s'-1 \right)+t,k,
4^{k-1} \left(s'-1 \right)+\alpha(t)]$
codes 
yields that of quaternary Hermitian LCD
$[\frac{4^k-1}{3} s+t,k,4^{k-1}s+\alpha(t)]$ codes 
for every integer $s \ge s'$.
We remark that the assumption~\eqref{eq:as} on the minimum weight
is automatically satisfied
for our study in Section~\ref{sec:4-2}.

\subsection{Consequence of Theorem~\ref{thm:main2}}

We end this section by giving a consequence of Theorem~\ref{thm:main2}.

\begin{cor}
Write $n=\frac{4^k-1}{3} s+t \ge k$,
where $s \in \ZZ_{\ge 0}$
and $t \in \{0,1,\ldots,\frac{4^k-1}{3}-1\}$.
Assume that $d$ is written as $d(s,t)=4^{k-1}s+\alpha(t)$,
where $\alpha(t)$ is a constant depending on $t$.
Let $r$ denote 
the integer $r_{(\frac{4^k-1}{3} s+t),k,d(s,t)}$ defined in~\eqref{eq:r}.
Let $s'$ denote
the integer $s'_{(\frac{4^k-1}{3} s+t),k,d(s,t)}$ defined in~\eqref{eq:s0}.
Suppose that $4r \ge k \ge 2$.
If there is no quaternary Hermitian LCD code
with dual distance $d^\perp \ge 2$ and parameters
\[
[4r,k,3r]
=
\left[
 \frac{4^k-1}{3} \left(s'-1 \right)+t,k,
4^{k-1} \left(s'-1 \right)+\alpha(t)
\right],
\]
then there is no quaternary Hermitian LCD code
with dual distance $d^\perp \ge 2$ and parameters
\[
\left[\frac{4^k-1}{3} s+t,k,4^{k-1}s+\alpha(t)\right],
\]
for every integer $s$.
\end{cor}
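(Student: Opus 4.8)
The plan is to obtain the Corollary as an immediate contrapositive of Theorem~\ref{thm:main2}. First I would record two preliminary observations. By Lemma~\ref{lem:r}, the integers $r$ and $s'$ depend only on $k$ and $t$, so the hypothesis ``there is no quaternary Hermitian LCD code with $d^\perp \ge 2$ and parameters $[4r,k,3r]$'' is a single statement not referring to $s$. Moreover, by \eqref{eq:41} and \eqref{eq:42} this parameter triple is exactly $[\frac{4^k-1}{3}(s'-1)+t,\,k,\,4^{k-1}(s'-1)+\alpha(t)]$, i.e.\ the member of the family $[\frac{4^k-1}{3}s+t,\,k,\,4^{k-1}s+\alpha(t)]$ with $s=s'-1$.

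Next I would apply Theorem~\ref{thm:main2}: under the standing hypothesis $4r \ge k \ge 2$, for every integer $s \ge s'$ there is a bijection between the set of equivalence classes of quaternary Hermitian LCD codes with $d^\perp \ge 2$ and parameters $[4r,k,3r]$ and the set of such classes with parameters $[\frac{4^k-1}{3}s+t,\,k,\,4^{k-1}s+\alpha(t)]$. A bijection is empty on one side exactly when it is empty on the other, so the assumed nonexistence of a code of the first type forces the nonexistence of a code of the second type for every $s \ge s'$. The value $s = s'-1$ simply returns the code $[4r,k,3r]$, for which nonexistence is the hypothesis itself; and since $4d-3n \ge 1$ is equivalent to $s \ge s'$, for $s < s'$ the reduction underlying Theorems~\ref{thm:main} and~\ref{thm:main2} does not engage, so the assertion is understood over the range $s \ge s'$ covered by Theorem~\ref{thm:main2} (together with the trivial case $s=s'-1$).

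I do not expect any real difficulty here: all the code-theoretic content --- the extraction of the block $S_k^{(4d-3n)}$, the column-multiplicity bound of Lemma~\ref{lem:mi-bound}, and the equivalence criterion of Lemma~\ref{lem:Eq} --- has already been expended in proving Theorems~\ref{thm:main} and~\ref{thm:main2}, and the Corollary merely negates both sides of that correspondence. The only care needed is clerical: confirming that the parameter identifications \eqref{eq:41}--\eqref{eq:42} are in force and that the condition $d^\perp \ge 2$ is carried along on both sides, both of which come directly from Theorem~\ref{thm:main2}.
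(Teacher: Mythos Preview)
Your argument for $s \ge s'$ and for $s = s'-1$ is fine and matches the paper's, but you have a genuine gap: the Corollary asserts nonexistence \emph{for every integer $s$}, including all values $s'' < s'-1$, and you explicitly decline to treat that range, saying the reduction ``does not engage'' there and that the assertion should be ``understood over the range $s \ge s'$''. That is not what the Corollary claims; the small-$s$ case is part of the conclusion and must be proved.

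The paper fills this gap with a short contradiction argument using Lemma~\ref{lem:190228-1}. If a quaternary Hermitian LCD $[\frac{4^k-1}{3}s''+t,\,k,\,4^{k-1}s''+\alpha(t)]$ code with $d^\perp \ge 2$ existed for some $s'' < s'-1$, then appending $(s'-1-s'')$ copies of $S_k$ to its generator matrix (via Lemma~\ref{lem:190228-1}) produces a quaternary Hermitian LCD $[\frac{4^k-1}{3}(s'-1)+t,\,k,\,4^{k-1}(s'-1)+\alpha(t)]$ code, still with $d^\perp \ge 2$; but these are exactly the parameters $[4r,k,3r]$, contradicting the hypothesis. So Theorem~\ref{thm:main2} alone is not enough here: you also need the forward construction of Lemma~\ref{lem:190228-1} to push a putative small-$s$ code up to level $s'-1$.
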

\begin{proof}
For $s \ge s'$, the assertion follows directly from 
Theorem~\ref{thm:main2}.

Suppose that
there is a  quaternary Hermitian LCD
$[\frac{4^k-1}{3} s''+t,k,4^{k-1}s''+\alpha(t)]$
code
with dual distance $d^\perp \ge 2$
for some  $s'' < s'-1$.
Then, by Lemma~\ref{lem:190228-1},
there is a  quaternary Hermitian LCD
$[\frac{4^k-1}{3}(s'-1)+t,k,4^{k-1}(s'-1)+\alpha(t)]$
code with dual distance $d^\perp \ge 2$.
This is a contradiction.
\end{proof}

The above corollary is an improvement of~\cite[Theorem~9 (ii)]{AHS}
by adding some assumption~\eqref{eq:as} on minimum weights.

\section{Quaternary optimal Hermitian LCD codes}
\label{sec:4-2}

In this section, by Theorem~\ref{thm:main2}, we give a classification of
quaternary optimal Hermitian LCD codes of dimension $2$ and
a classification of
quaternary optimal Hermitian LCD codes of dimension $3$.

\subsection{Classification method}
\label{sec:M}

Here we suppose that $k \in \{2,3\}$.
As described in Remark~\ref{rem},
it is possible to find
representatives of all equivalence classes of quaternary Hermitian
LCD $[n,k]$ codes with dual distances $d^\perp \ge 2$
as $C_{k}(m)$, by considering all vectors
$m=(m_1,m_2,\ldots,m_{\frac{4^k-1}{3}}) \in
\mathbb{Z}_{\ge 0}^{\frac{4^k-1}{3}}$ satisfying
$n=\sum_{i=1}^{\frac{4^k-1}{3}}m_i$
and the condition~\eqref{eq:mi}.
In addition, any quaternary $[n,k,d]$ code is equivalent to some
code with generator matrix of form 
$
\left(\begin{array}{cc}
I_k &  A 
\end{array}\right),
$
where $A$ is a $k \times (n-k)$ matrix and the weight of the first row
of $A$ is exactly $d-1$.
Hence, we may assume without loss of generality that
\begin{align}\label{eq:k2}
 m_1 \ge 1, m_2 \ge 1 &\text{ and }\sum_{i \in {\mathcal S}_k}m_i = d
 \text{ if } k=2,\\\label{eq:k3}
m_1 \ge 1, m_2 \ge 1, m_6 \ge 1 &\text{ and } \sum_{i \in {\mathcal S}_k}m_i = d
 \text{ if } k=3,
\end{align}
where ${\mathcal S}_k$ denotes the support of the first row of $S_k$.
In this way, 
we found all quaternary Hermitian 
LCD $[n,k,d]$ codes which must be checked further for equivalences.
For calculations of determinants of $G\overline{G}^T$
for generator matrices $G$,
the NTL function {\tt determinant}~\cite{ntl} was used. 
To test equivalence of quaternary codes,
we used the algorithm given in~\cite[Section 7.3.3]{KO} as follows.
For a quaternary $[n,k]$ code $C$, define
the digraph $\Gamma(C)$ with vertex set $V$ and arc set $A$, where
\begin{align*}
V&=
C \cup (\{1,2,\dots,n\}\times (\FF_4 \setminus \{0\})),
\\
A&=
\{(c,(j,c_j))\mid c=(c_{1},c_2,\ldots,c_{n}) \in C,  j \in \{1,2,\ldots,n\}\}
\\& \qquad
\cup \{((j,y),(j,\omega y))\mid
j \in \{1,2,\ldots,n\},\ y \in \FF_4 \setminus \{0\}\}.
\end{align*} 
Then, two quaternary $[n,k]$ codes $C$ and $C'$ are equivalent
if and only if $\Gamma(C)$ and $\Gamma(C')$  are isomorphic.
We used {\sc nauty}~\cite{nauty}
for digraph isomorphism testing.

All computer calculations in this section were
done by programs in the language C.
In addition, few verification was done by {\sc Magma}~\cite{Magma}.
Let $\cC_{n,k,d}$ denote our equivalence classes of quaternary
Hermitian LCD $[n,k,d]$ codes with dual distances $d^\perp \ge 2$
obtained by the above method.
Especially, 
we verified by {\sc Magma} that 
$C$ is a quaternary
Hermitian LCD $[n,k,d]$ code with dual distance $d^\perp \ge 2$
for $C \in \cC_{n,k,d}$, and
$C$ and $C'$ are inequivalent
for $C,C' \in \cC_{n,k,d}$ with $C \ne C'$.

\subsection{Quaternary optimal Hermitian $[n,2]$ LCD codes}

The largest minimum weights $d_4(n,2)$ were determined in~\cite{AHS},
where $d_4(n,2)$ are listed in Table~\ref{Tab:4-2}.
Recently, Ishizuka~\cite{I} has completed a classification of 
quaternary optimal Hermitian LCD codes of dimension $2$.
Here we present an alternative approach to the classification by using
Theorem~\ref{thm:main2}.

For $n \ge 2$,
write $n=5s+t$, where $s \in \ZZ_{\ge 0}$ and $t \in \{0,1,\ldots,4\}$.
Let $r=r_{5s+t,2,d_4(5s+t,2)}$ and
$s'=s'_{5s+t,2,d_4(5s+t,2)}$ be the integers defined
in~\eqref{eq:r} and~\eqref{eq:s0}, respectively.
For each $5s+t$, we list $d_4(5s+t,2)$, $s'$ and $r$ in Table~\ref{Tab:4-2}.
Then $d_4(5s+t,2)$ is written as 
$4s+\alpha(t)$, where $\alpha(t)$ is a constant depending on only $t$.
Since $d_4(5s+t,2)$ satisfies 
the assumption~\eqref{eq:as} in Theorem~\ref{thm:main2},
we have the following:

\begin{prop}\label{prop:2}
There is a one-to-one correspondence between
equivalence classes of quaternary Hermitian LCD
$[4r,2,3r]$ codes
with dual distances $d^\perp \ge 2$
and equivalence classes of quaternary Hermitian LCD
$[5s+t,2,d_4(5s+t,2)]$ code
with dual distances $d^\perp \ge 2$ for every integer
$s \ge s'$, where $r$ and $s'$ are listed in Table~\ref{Tab:4-2}.
\end{prop}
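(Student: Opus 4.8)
The plan is to observe that Proposition~\ref{prop:2} is an immediate specialization of Theorem~\ref{thm:main2} to the case $k=2$, once we have checked that the hypotheses of that theorem are met. First I would recall the setup: for $n \ge 2$ we write $n = 5s+t$ with $s \in \ZZ_{\ge 0}$ and $t \in \{0,1,\ldots,4\}$, noting that $\frac{4^2-1}{3} = 5$, so this is exactly the decomposition required by Theorem~\ref{thm:main2} with $k=2$. The known values of $d_4(n,2)$ from~\cite{AHS} and~\cite{LLGF} are $d_4(n,2) = \lfloor \tfrac{4n}{5} \rfloor$ when $n \equiv 1,2,3 \pmod 5$ and $d_4(n,2) = \lfloor \tfrac{4n}{5} \rfloor - 1$ otherwise. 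The first step is therefore to verify that this formula can be rewritten in the form $d_4(5s+t,2) = 4s + \alpha(t)$ for a function $\alpha$ depending only on $t$; concretely, since $\lfloor \tfrac{4(5s+t)}{5} \rfloor = 4s + \lfloor \tfrac{4t}{5}\rfloor$, one reads off $\alpha(0)=0$, $\alpha(1)=0$, $\alpha(2)=1$, $\alpha(3)=2$, $\alpha(4)=2$. This is precisely the assumption~\eqref{eq:as} in Theorem~\ref{thm:main2}, which is thus automatically satisfied.

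Next I would invoke Theorem~\ref{thm:main2} directly with $k=2$. Setting $r = r_{5s+t,2,d_4(5s+t,2)}$ and $s' = s'_{5s+t,2,d_4(5s+t,2)}$ as in~\eqref{eq:r} and~\eqref{eq:s0}, Lemma~\ref{lem:r} guarantees that both $r$ and $s'$ depend only on $t$ (for fixed $k=2$), which is why they can legitimately be tabulated in Table~\ref{Tab:4-2} as functions of $t$ alone. The remaining hypothesis of Theorem~\ref{thm:main2} to check is $4r \ge k = 2$; this is a finite check over the five values $t \in \{0,1,2,3,4\}$, using~\eqref{eq:r2}, i.e. $r = 4^{k-1}t - \tfrac{4^k-1}{3}\alpha(t) = 4t - 5\alpha(t)$ for $k=2$. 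Evaluating: $r = 0,4,3,2,4$ for $t=0,1,2,3,4$ respectively. For $t \ne 0$ we get $4r \ge 8 > 2$, so the hypothesis holds; the degenerate case $t=0$ (where $r=0$) would need to be read off separately, but there $4r = 0 < 2$, so Theorem~\ref{thm:main2} is simply not being applied and the entry in the table should be interpreted accordingly — the anticipated minor wrinkle is handling, or explicitly excluding, the $t=0$ row.

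With the hypotheses in place, Theorem~\ref{thm:main2} yields exactly the asserted one-to-one correspondence: for every integer $s \ge s'$, the equivalence classes of quaternary Hermitian LCD codes with $d^\perp \ge 2$ and parameters $[4r,2,3r] = [5(s'-1)+t,\,2,\,4(s'-1)+\alpha(t)]$ are in bijection with the equivalence classes of quaternary Hermitian LCD codes with $d^\perp \ge 2$ and parameters $[5s+t,\,2,\,4s+\alpha(t)] = [5s+t,\,2,\,d_4(5s+t,2)]$, where the last equality uses the $\alpha(t)$ computed above. I expect no real obstacle here — the entire content is bookkeeping: confirming the shape~\eqref{eq:as} of $d_4(n,2)$, computing the five pairs $(r,s')$ to populate the table, and checking $4r \ge 2$. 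The only place demanding a little care is making sure the tabulated $s'$ matches formula~\eqref{eq:s0}, namely $s' = \tfrac{4r - t}{5} + 1$, and that $s'$ is indeed a nonnegative integer in each case, so that the phrase "for every integer $s \ge s'$" is meaningful.
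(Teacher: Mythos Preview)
Your approach is exactly the paper's: verify that $d_4(5s+t,2)$ has the form $4s+\alpha(t)$ required by assumption~\eqref{eq:as}, and then invoke Theorem~\ref{thm:main2} with $k=2$. The paper treats the proposition as immediate once~\eqref{eq:as} is checked, and so do you.

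However, two of your numerical computations are wrong, and one of them leads you to an incorrect structural conclusion. For $t=0$ we are in the ``otherwise'' branch of the formula for $d_4(n,2)$, so $d_4(5s,2)=4s-1$ and hence $\alpha(0)=-1$, not $0$. This gives $r=4\cdot 0-5\cdot(-1)=5$, so $4r=20\ge 2$ and Theorem~\ref{thm:main2} applies without issue: there is no degenerate case to exclude, and the entry $(s',r)=(5,5)$ in Table~\ref{Tab:4-2} is exactly what the theorem produces. Separately, for $t=4$ your value $\alpha(4)=2$ is correct, but then $r=4\cdot 4-5\cdot 2=6$, not $4$; this again matches the table. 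With these two corrections your bookkeeping lines up with Table~\ref{Tab:4-2} in every row, the hypothesis $4r\ge 2$ holds for all $t\in\{0,1,2,3,4\}$, and the argument goes through verbatim.
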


\begin{table}[thbp]
\caption{$d_4(n,2)$, $s'$ and $r$}
\label{Tab:4-2}
\begin{center}
{\small
\begin{tabular}{c|c|c|c||c|c|c|c}
\noalign{\hrule height0.8pt}
 $n$ & $d_4(n,2)$ & $s'$ & $r$&
 $n$ & $d_4(n,2)$ & $s'$ & $r$  \\
\hline
 $5s$   &$ 4s-1$&$5$ & $5$ & $5s+3$ &$ 4s+2$&$2$ & $2$ \\
 $5s+1$ &$ 4s $&$4$  & $4$ & $5s+4$ &$ 4s+2$&$5$ & $6$ \\
 $5s+2$ &$ 4s+1$&$3$ & $3$ & &&&\\
\noalign{\hrule height0.8pt}
\end{tabular}
}
\end{center}
\end{table}

By the method given in Section~\ref{sec:M},
our computer search completed a classification of 
all quaternary optimal Hermitian LCD $[4r,2,3r]$ codes
with dual distances $d^\perp \ge 2$
for $r$ listed in Table~\ref{Tab:4-2}.
The numbers $N_4(4r,2)$ of all inequivalent
quaternary optimal Hermitian LCD $[4r,2,3r]$ codes
with dual distances $d^\perp \ge 2$ are listed in Table~\ref{Tab:2}.
In addition, our computer search completed a classification of 
all quaternary optimal Hermitian LCD $[5s+t,2,d_4(5s+t,2)]$ codes
with dual distances $d^\perp \ge 2$
for $s <s'$, where $s'$ is given in Table~\ref{Tab:4-2}.
The numbers $N_4(5s+t,2)$ of all inequivalent
quaternary optimal Hermitian LCD $[5s+t,2,d_4(5s+t,2)]$ codes
with dual distances $d^\perp \ge 2$ are also listed in Table~\ref{Tab:2}.
From Proposition~\ref{prop:2} and Table~\ref{Tab:2},
we have the following:

\begin{prop}\label{prop:4-2-2}
\begin{enumerate}
\item Suppose that $t \in \{0,1\}$.
Then there are $2$ inequivalent quaternary optimal Hermitian LCD $[5s+t,2]$
codes with dual distances $d^\perp\ge 2$
for every integer $s \ge 2$.

\item Suppose that $t \in \{2,3\}$.
Then there is a unique quaternary optimal Hermitian LCD $[5s+t,2]$
codes with dual distances $d^\perp\ge 2$, up to equivalence, 
for every integer $s \ge 0$.

\item There are $5$ inequivalent quaternary optimal Hermitian LCD $[5s+4,2]$
codes with dual distances $d^\perp\ge 2$ for every integer $s \ge 3$.
\end{enumerate}
\end{prop}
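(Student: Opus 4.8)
The plan is to combine Proposition~\ref{prop:2} with the finite computer classification summarized in Table~\ref{Tab:2}, using Lemma~\ref{lem:class} to pass from codes with dual distance $d^\perp\ge 2$ to all codes. First I would fix $t\in\{0,1,2,3,4\}$ and read off from Table~\ref{Tab:4-2} the values of $s'$ and $r$; for each such $t$, Proposition~\ref{prop:2} gives a bijection between equivalence classes of quaternary Hermitian LCD $[5s+t,2,d_4(5s+t,2)]$ codes with $d^\perp\ge 2$ and equivalence classes of quaternary Hermitian LCD $[4r,2,3r]$ codes with $d^\perp\ge 2$, valid for every $s\ge s'$. So for $s\ge s'$ the count $N_4(5s+t,2)$ is independent of $s$ and equals $N_4(4r,2)$, which has been computed: $N_4(20,2)=2$ for $t=0$ ($r=5$), $N_4(16,2)=2$ for $t=1$ ($r=4$), $N_4(12,2)=1$ for $t=2$ ($r=3$), $N_4(8,2)=1$ for $t=3$ ($r=2$), and $N_4(24,2)=5$ for $t=4$ ($r=6$). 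For the finitely many remaining values $s<s'$ (namely $s\le 4$ when $t\in\{0,4\}$, $s\le 3$ when $t=1$, $s\le 2$ when $t=2$, $s\le 1$ when $t=3$), the counts are supplied directly by the computer classification in Table~\ref{Tab:2}, and one checks that they already agree with the stable value in exactly the ranges claimed: $N_4(5s+t,2)=2$ for all $s\ge 2$ when $t\in\{0,1\}$, $N_4(5s+t,2)=1$ for all $s\ge 0$ when $t\in\{2,3\}$, and $N_4(5s+4,2)=5$ for all $s\ge 3$.

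Having classified the codes with $d^\perp\ge 2$, the second step is to account for those with $d^\perp=1$. By Lemma~\ref{lem:class}, $\cC_{n,2,d}=\cD_{n,2,d}\cup\widehat{\cC_{n-1,2,d}}$, where the second set consists of equivalence classes obtained by appending a zero coordinate to a quaternary Hermitian LCD $[n-1,2,d]$ code. A code of length $n$ with $d^\perp=1$ has a zero coordinate (every codeword vanishes there, or more precisely the dual has a weight-one word forcing a coordinate of $C$ to be identically zero), hence is of the form $\widehat{C'}$ for some $[n-1,2,d]$ code $C'$; thus to determine whether there exist optimal codes beyond those with $d^\perp\ge 2$ one must check whether $d_4(n-1,2)\ge d_4(n,2)$, i.e.\ whether there is a quaternary Hermitian LCD $[n-1,2]$ code of minimum weight $d_4(n,2)$. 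From the formula $d_4(n,2)=\lfloor 4n/5\rfloor$ or $\lfloor 4n/5\rfloor-1$ recorded in the introduction and in Table~\ref{Tab:4-2}, one reads off that $d_4(n-1,2)<d_4(n,2)$ for every $n\ge 3$ (the minimum weight of an optimal code strictly increases at each length in all residue classes), so in fact $\widehat{\cC_{n-1,2,d_4(n,2)}}=\emptyset$ and $\cC_{n,2,d_4(n,2)}=\cD_{n,2,d_4(n,2)}$. Therefore the counts obtained in the first step are the full counts $N_4(n,2)$, and the three statements follow.

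The one thing that needs genuine care, rather than being a pure bookkeeping exercise, is verifying that the strict inequality $d_4(n-1,2)<d_4(n,2)$ holds for \emph{all} the relevant $n$ so that no optimal code is lost by restricting to $d^\perp\ge2$; this is a short case analysis on $n\bmod 5$ using the explicit formula, but it is the only place where the argument is not simply quoting Proposition~\ref{prop:2} and reading a table. Everything else---the stabilization of $N_4(5s+t,2)$ for $s\ge s'$, and the agreement of the small-$s$ computer data with the stable value in the stated ranges---is immediate from Proposition~\ref{prop:2} and Table~\ref{Tab:2}. I would write the proof by treating the three cases $t\in\{0,1\}$, $t\in\{2,3\}$, and $t=4$ in turn, in each case citing Proposition~\ref{prop:2} for $s\ge s'$, Table~\ref{Tab:2} for $s<s'$, and Lemma~\ref{lem:class} together with the monotonicity of $d_4(\cdot,2)$ to reduce to the $d^\perp\ge2$ case.
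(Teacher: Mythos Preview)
Your first paragraph is exactly the paper's proof: the proposition follows immediately from Proposition~\ref{prop:2} for $s\ge s'$ together with the computer data in Table~\ref{Tab:2} for the finitely many $s<s'$, and nothing more is needed.

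The remainder of your proposal rests on a misreading of the statement. The proposition is asserted \emph{only} for codes with dual distance $d^\perp\ge 2$ (the phrase ``with dual distances $d^\perp\ge 2$'' appears in each item), so there is no need to invoke Lemma~\ref{lem:class} or to account for codes with $d^\perp=1$; your entire ``second step'' should be deleted. Moreover, that step contains a genuine error: the claim that $d_4(n-1,2)<d_4(n,2)$ for every $n\ge 3$ is false, since Table~\ref{Tab:4-2} gives $d_4(5s+3,2)=d_4(5s+4,2)=4s+2$. Thus for $n\equiv 4\pmod 5$ there \emph{are} optimal Hermitian LCD $[n,2]$ codes with $d^\perp=1$ (obtained by appending a zero coordinate to an optimal $[n-1,2]$ code), and the total number of optimal codes in that residue class is strictly larger than the number with $d^\perp\ge 2$ stated in part~(iii). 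Had the proposition been about all optimal codes, your argument would have failed precisely at $t=4$.
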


Of course, the above classification coincides with that given in~\cite{I}.

\begin{table}[thbp]
\caption{$N_4(n,2)$}
\label{Tab:2}
\begin{center}
{\small
\begin{tabular}{c|l|lll}
 \noalign{\hrule height0.8pt}
 $n$ & \multicolumn{1}{c|}{$N_4(4r,2)$}
 & \multicolumn{3}{c}{$N_4(5s+t,2)$ $(s < s')$} \\
\hline
$5s$   & $N_4(20,2)=2$ &$N_4(5,2)=1$& $N_4(10,2)=2$ &$N_4(15,2)=2$\\
$5s+1$ & $N_4(16,2)=2$ &$N_4(6,2)=1$& $N_4(11,2)=2$   \\
$5s+2$ & $N_4(12,2)=1$ &$N_4(2,2)=1$& $N_4(7,2)=1$   \\
$5s+3$ & $N_4(8,2)=1$  &$N_4(3,2)=1$& \\
$5s+4$ & $N_4(24,2)=5$ &$N_4(4,2)=1$& $N_4(9,2)=3$ &$N_4(14,2)=4$\\
	&& $N_4(19,2)=5$\\
\noalign{\hrule height0.8pt}
\end{tabular}
}
\end{center}
\end{table}

\subsection{Quaternary Hermitian LCD $[n,2,d_4(n,2)-1]$ codes}

Similar to the previous subsection, 
here we give a classification of 
quaternary near-optimal Hermitian LCD $[n,2,d_4(n,2)-1]$ codes.
Similar to Proposition~\ref{prop:2}, we have the following:

\begin{prop}\label{prop:2d}
There is a one-to-one correspondence between
equivalence classes of quaternary Hermitian LCD
$[4r,2,3r]$ codes
with dual distances $d^\perp \ge 2$
and equivalence classes of quaternary Hermitian LCD
$[5s+t,2,d_4(5s+t,2)-1]$ code
with dual distances $d^\perp \ge 2$ for every integer
$s \ge s'$, where $r$ and $s'$ are listed in Table~\ref{Tab:4-2d}.
\end{prop}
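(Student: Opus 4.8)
The plan is to deduce Proposition~\ref{prop:2d} from Theorem~\ref{thm:main2} by exactly the same specialization used for Proposition~\ref{prop:2}, the only change being that the optimal distance $d_4(5s+t,2)$ is replaced by the near-optimal distance $d_4(5s+t,2)-1$. Since Theorem~\ref{thm:main2} is already available, the whole task reduces to checking that its hypotheses hold for $k=2$ and $d=d_4(5s+t,2)-1$, and then reading off the resulting parameters.

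First I would confirm the form assumption~\eqref{eq:as}. From Table~\ref{Tab:4-2} we have $d_4(5s+t,2)=4s+\alpha_0(t)$ with $\alpha_0(t)=-1,0,1,2,2$ for $t=0,1,2,3,4$, so $d_4(5s+t,2)-1=4s+\alpha(t)$ with $\alpha(t)=\alpha_0(t)-1$, which still depends only on $t$; thus~\eqref{eq:as} holds. Next, using~\eqref{eq:r2} with $k=2$ (so $\frac{4^k-1}{3}=5$ and $4^{k-1}=4$) gives $r=r_{(5s+t),2,d(s,t)}=4t-5\alpha(t)$, which by Lemma~\ref{lem:r} does not depend on $s$; evaluating yields $r=10,9,8,7,11$ for $t=0,1,2,3,4$, all positive, so $4r\ge 2=k$. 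Finally $n=5s+t\ge 2=k$ whenever $n\ge 2$, and $s'=s'_{(5s+t),2,d(s,t)}$ is given by~\eqref{eq:s0}, namely $s'=(4r-t)/5+1$; since $4r-t=5(3t-4\alpha(t))$ this is always an integer, and one checks $s'=9,8,7,6,9$ for $t=0,1,2,3,4$. All hypotheses of Theorem~\ref{thm:main2} are therefore satisfied.

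Applying Theorem~\ref{thm:main2} with these data now gives directly the claimed one-to-one correspondence between equivalence classes of quaternary Hermitian LCD $[4r,2,3r]$ codes with $d^\perp\ge 2$ and equivalence classes of quaternary Hermitian LCD $[5s+t,2,d_4(5s+t,2)-1]$ codes with $d^\perp\ge 2$, for every integer $s\ge s'$, where $[4r,2,3r]=[5(s'-1)+t,2,4(s'-1)+\alpha(t)]$. I do not anticipate a genuine obstacle: the argument is a mechanical specialization of Theorem~\ref{thm:main2}, and the only substantive output is the short table of $\alpha(t)$, $r$ and $s'$ values, which should be recorded as Table~\ref{Tab:4-2d} and then combined with a finite computer classification of the cases $s<s'$, exactly as in the preceding subsection on optimal codes.
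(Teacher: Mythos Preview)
Your approach is exactly that of the paper, which proves Proposition~\ref{prop:2d} by the single remark ``Similar to Proposition~\ref{prop:2}'' --- i.e., verify assumption~\eqref{eq:as} for $d=d_4(5s+t,2)-1$ and invoke Theorem~\ref{thm:main2}. One numerical point: for $t=3$ you obtain $\alpha(3)=1$, $r=7$, $s'=6$ (since $d_4(5s+3,2)-1=4s+1$), whereas Table~\ref{Tab:4-2d} records $d_4(5s+3,2)-1=4s$, $r=12$, $s'=10$. As Table~\ref{Tab:4-2} and the formula in the introduction give $d_4(5s+3,2)=4s+2$, your numbers are the ones the stated argument actually yields; the $t=3$ row of Table~\ref{Tab:4-2d} appears to be a slip in the paper.
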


\begin{table}[thbp]
\caption{$d_4(n,2)-1$, $s'$ and $r$}
\label{Tab:4-2d}
\begin{center}
{\small
\begin{tabular}{c|c|c|c||c|c|c|c}
\noalign{\hrule height0.8pt}
 $n$ & $d_4(n,2)-1$ & $s'$ & $r$&
 $n$ & $d_4(n,2)-1$ & $s'$ & $r$  \\
\hline
 $5s$   &$ 4s-2$&$9$ & $10$ & $5s+3$ &$ 4s  $&$10$ & $12$ \\
 $5s+1$ &$ 4s-1$&$8$ & $ 9$ & $5s+4$ &$ 4s+1$&$ 9$ & $11$ \\
 $5s+2$ &$ 4s  $&$7$ & $ 8$ & &&&\\
\noalign{\hrule height0.8pt}
\end{tabular}
}
\end{center}
\end{table}

Our computer search completed a classification of 
all quaternary near-optimal Hermitian LCD $[4r,2,3r]$ codes
with dual distances $d^\perp \ge 2$
for $r$ listed in Table~\ref{Tab:4-2d} and 
all quaternary near-optimal Hermitian LCD $[5s+t,2,d_4(5s+t,2)-1]$ codes
with dual distances $d^\perp \ge 2$
for $s <s'$, where $s'$ is given in Table~\ref{Tab:4-2d}.
In Table~\ref{Tab:2d},
we list 
the numbers $N'_4(4r,2)$ of the inequivalent quaternary
near-optimal Hermitian LCD $[4r,2,3r]$ codes and
the numbers $N'_4(5s+t,2)$ of the inequivalent quaternary
near-optimal Hermitian LCD $[5s+t,2,d_4(5s+t,2)-1]$ codes.
From Proposition~\ref{prop:2d} and Table~\ref{Tab:2d},
we have the following:

\begin{prop}\label{prop:4-2-2d}
\begin{enumerate}
\item There are $15$ inequivalent quaternary near-optimal Hermitian LCD
       $[5s,2,4s-2]$
codes with dual distances $d^\perp\ge 2$ for every integer $s \ge 7$.
\item There are $7$ inequivalent quaternary near-optimal Hermitian LCD
       $[5s+1,2,4s-1]$
codes with dual distances $d^\perp\ge 2$ for every integer $s \ge 5$.
\item There are $8$ inequivalent quaternary near-optimal Hermitian LCD
       $[5s+2,2,4s]$
codes with dual distances $d^\perp\ge 2$ for every integer $s \ge 5$.
\item There are $22$ inequivalent quaternary near-optimal Hermitian LCD
       $[5s+3,2,4s]$
codes with dual distances $d^\perp\ge 2$ for every integer $s \ge 8$.
\item There are $12$ inequivalent quaternary near-optimal Hermitian LCD
       $[5s+4,2,4s+1]$
codes with dual distances $d^\perp\ge 2$ for every integer $s \ge 6$.
\end{enumerate}
\end{prop}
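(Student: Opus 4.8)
The plan is to combine Proposition~\ref{prop:2d} with the explicit finite classification results encoded in Table~\ref{Tab:2d}, exactly parallel to how Proposition~\ref{prop:4-2-2} follows from Proposition~\ref{prop:2} and Table~\ref{Tab:2}. First I would fix a residue $t \in \{0,1,2,3,4\}$ and read off from Table~\ref{Tab:4-2d} the corresponding values of $s'$ and $r$; for instance for $t=0$ we have $s'=9$ and $r=10$. By Proposition~\ref{prop:2d}, for every $s \ge s'$ the number of equivalence classes of quaternary Hermitian LCD $[5s+t,2,d_4(5s+t,2)-1]$ codes with $d^\perp \ge 2$ equals the number $N'_4(4r,2)$ of equivalence classes of quaternary Hermitian LCD $[4r,2,3r]$ codes with $d^\perp \ge 2$, which the computer search has determined and recorded in Table~\ref{Tab:2d}. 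So each of the five claimed constants (15, 7, 8, 22, 12) is simply the tabulated value $N'_4(4r,2)$ for the appropriate $r$.

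The remaining point is the precise range of $s$ for which the stated count holds. The one-to-one correspondence in Proposition~\ref{prop:2d} is only asserted for $s \ge s'$, so I would verify that the bound appearing in each item of the proposition matches $s'$ from Table~\ref{Tab:4-2d}: $s \ge 7$ for $t=2$ wait --- more carefully, $s\ge 9$ corresponds to $t=0$ giving item (i) with $s \ge 7$? No: I need to double-check the tabulated $s'$ against the stated lower bounds. Here the subtlety is that for the small values $s < s'$ the classification was carried out separately by direct computer search (as stated in the text preceding Table~\ref{Tab:2d}), and the numbers $N'_4(5s+t,2)$ for those small cases need not agree with $N'_4(4r,2)$; the proposition only claims the uniform value in the stable range. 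So the proof is: invoke Proposition~\ref{prop:2d} to get, for each $t$, equality of the count with $N'_4(4r,2)$ for all $s \ge s'$, then cite the value of $N'_4(4r,2)$ from Table~\ref{Tab:2d}, and finally note that the lower bound on $s$ in the statement is precisely the value $s'$ tabulated in Table~\ref{Tab:4-2d} for that $t$.

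The main obstacle is not mathematical but bookkeeping: one must make sure the five rows of Table~\ref{Tab:4-2d} are correctly paired with the five items of the proposition (so that the value of $r$ used to look up $N'_4(4r,2)$ in Table~\ref{Tab:2d} is the one belonging to the same $t$), and that the lower bounds $s \ge 7, 5, 5, 8, 6$ indeed coincide with the $s'$ values $9$? --- I would recompute $s'$ from \eqref{eq:s0} as a sanity check in at least one case, using $r$ from \eqref{eq:r2} with $\alpha(t)$ read off from the column $d_4(5s+t,2)-1 = 4s + \alpha(t)$. Concretely, for $t=1$ we read $d_4(5s+1,2)-1 = 4s-1$, so $\alpha(1) = -1$, hence $r = 4^{1}\cdot 1 - 5\cdot(-1) = 9$ and $s' = \frac{4\cdot 9 - 1}{5} + 1 = 7+1 = 8$, matching the table; then $N'_4(36,2) = 7$ from Table~\ref{Tab:2d} gives item (ii) --- but item (ii) states $s \ge 5$, not $s \ge 8$, so one must be careful that the stated bound may be smaller than $s'$ when the small-$s$ searches happen to yield the same count. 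Thus the honest proof also checks, from the computed values $N'_4(5s+t,2)$ for $s < s'$ in Table~\ref{Tab:2d}, that the count already stabilizes at the claimed value for the smaller $s$ listed in each item.

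\begin{proof}
Fix $t \in \{0,1,2,3,4\}$ and let $r$ and $s'$ be the integers listed in the corresponding row of Table~\ref{Tab:4-2d}. By Proposition~\ref{prop:2d}, for every integer $s \ge s'$ there is a bijection between the equivalence classes of quaternary Hermitian LCD $[5s+t,2,d_4(5s+t,2)-1]$ codes with $d^\perp \ge 2$ and the equivalence classes of quaternary Hermitian LCD $[4r,2,3r]$ codes with $d^\perp \ge 2$; the latter number is $N'_4(4r,2)$, and by the computer classification it equals the value recorded in Table~\ref{Tab:2d}, namely $N'_4(40,2)=15$, $N'_4(36,2)=7$, $N'_4(32,2)=8$, $N'_4(48,2)=22$, $N'_4(44,2)=12$ for $t=0,1,2,3,4$, respectively. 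This already yields each stated count for all $s \ge s'$. For the finitely many remaining values $s$ with $s < s'$ that still lie above the lower bound stated in the corresponding item of the proposition, the direct computer search recorded the values $N'_4(5s+t,2)$ in Table~\ref{Tab:2d}, and inspection of those entries shows that the count has already attained the same value $N'_4(4r,2)$ at the smaller bound. By Lemma~\ref{lem:class} every such code has $d^\perp \ge 2$ or is obtained by appending a zero coordinate to a shorter code, and since $d_4(5(s-1)+t,2)-1 < d_4(5s+t,2)-1$ in each case, no code of the relevant length and minimum weight arises from the append construction; hence the $d^\perp \ge 2$ count is the total count. Combining the two ranges gives the asserted value of the number of inequivalent codes for every $s$ at least the bound stated in each item.
\end{proof}
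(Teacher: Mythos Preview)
Your approach is exactly the paper's: the proposition is deduced from Proposition~\ref{prop:2d} together with the computed values in Table~\ref{Tab:2d}, and you have correctly identified the additional point the paper leaves implicit, namely that the stated lower bounds on $s$ are strictly smaller than the thresholds $s'$ in Table~\ref{Tab:4-2d}, so one must also read off from Table~\ref{Tab:2d} that the small-$s$ counts $N'_4(5s+t,2)$ have already stabilised at $N'_4(4r,2)$ for those intermediate $s$. That part of your formal proof is fine.

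However, the last sentence of your proof (the appeal to Lemma~\ref{lem:class}) is both unnecessary and incorrect, and should be deleted. It is unnecessary because the proposition only asserts a count of codes with $d^\perp\ge 2$; nothing about the total count is being claimed. It is incorrect for two reasons. First, Lemma~\ref{lem:class} relates codes of length $n$ to codes of length $n-1$, not to codes of length $5(s-1)+t$; your inequality $d_4(5(s-1)+t,2)-1 < d_4(5s+t,2)-1$ compares lengths differing by $5$, which is irrelevant. Second, even with the correct comparison at length $n-1$, the conclusion fails: to rule out the append construction you would need $d_4(5s+t,2)-1 > d_4(5s+t-1,2)$, i.e.\ $d_4(n,2)\ge d_4(n-1,2)+2$, and this never holds (the function $d_4(\,\cdot\,,2)$ increases by at most $1$ when the length increases by $1$). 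So the $d^\perp\ge 2$ count is \emph{not} the total count in general, and your proof should not assert that it is.
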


\begin{table}[thbp]
\caption{$N'_4(n,2)$}
\label{Tab:2d}
\begin{center}
{\small
\begin{tabular}{c|l|lll}
 \noalign{\hrule height0.8pt}
 $n$ & \multicolumn{1}{c|}{$N'_4(4r,2)$}
 & \multicolumn{3}{c}{$N'_4(5s+t,2)$ $(s < s')$} \\
\hline
$5s$   & $N'_4(40,2)=15$ &$N'_4(5,2)=1$ &$N'_4(10,2)=5$ &$N'_4(15,2)=9$ \\
  &&$N'_4(20,2)=11$ &$N'_4(25,2)=13$ &$N'_4(30,2)=14$ \\
  &&$N'_4(35,2)=15$ &&\\
$5s+1$ & $N'_4(36,2)=7$ &$N'_4(6,2)=2$&$N'_4(11,2)=3$&$N'_4(16,2)=5$\\
       &                &$N'_4(21,2)=6$&$N'_4(26,2)=7$&$N'_4(31,2)=7$\\
$5s+2$ & $N'_4(32,2)=8$ &$N'_4(7,2)=2$&$N'_4(12,2)=4$&$N'_4(17,2)=6$ \\
       &                &$N'_4(22,2)=7$&$N'_4(27,2)=8$ \\
$5s+3$ & $N'_4(48,2)=22$&$N'_4(8,2)=2$&$N'_4(13,2)=7$&$N'_4(18,2)=12$ \\
       &                &$N'_4(23,2)=16$&$N'_4(28,2)=18$&$N'_4(33,2)=20$ \\
       &                &$N'_4(38,2)=21$&$N'_4(43,2)=22$& \\
$5s+4$ & $N'_4(44,2)=12$&$N'_4(4,2)=1$&$N'_4(9,2)=2$&$N'_4(14,2)=7$\\
       &                &$N'_4(19,2)=8$&$N'_4(24,2)=10$&$N'_4(29,2)=11$\\
       &                &$N'_4(34,2)=12$&$N'_4(39,2)=12$&\\
\noalign{\hrule height0.8pt}
\end{tabular}
}
\end{center}
\end{table}

\begin{table}[thb]
\caption{$(m_1,m_2,m_3,m_4,m_5)$}
\label{Tab:m}
\begin{center}
{\footnotesize
\begin{tabular}{c|llll}
\noalign{\hrule height0.8pt}
 $(n,d)$ & \multicolumn{4}{c}{$(m_1,m_2,m_3,m_4,m_5)$} \\
\hline
$(32,24)$ &
$(3,8,8,8,5)$ &
$(5,8,5,6,8)$ &
$(6,8,5,7,6)$ &
$(6,8,7,4,7)$ \\&
$(6,8,8,7,3)$ &
$(7,8,4,5,8)$ &
$(8,8,2,7,7)$ &
$(8,8,8,7,1)$ 
\\
\hline
 $(36,27)$ &
$(6,9,7,6,8)$ &
$(6,9,9,6,6)$ &
$(8,9,3,8,8)$ &
$(8,9,8,2,9)$ \\&
$(8,9,8,4,7)$ &
$(8,9,8,5,6)$ &
$(8,9,9,6,4)$ 
\\
\hline
 $(40,30)$ &
$(4,10,10,7,9)$&
$(6,10,7,9,8) $&
$(6,10,7,10,7)$&
$(6,10,9,10,5)$\\&
$(7,10,5,8,10)$&
$(8,10,7,8,7) $&
$(8,10,9,5,8) $&
$(8,10,9,6,7) $\\&
$(9,10,9,6,6) $&
$(9,10,9,8,4) $&
$(9,10,10,2,9)$&
$(10,10,1,9,10)$\\&
$(10,10,9,3,8) $&
$(10,10,10,5,5)$&
$(10,10,10,7,3)$
\\
\hline
 $(44,33)$ &
$(4,11,9,10,10)$&
$(7,11,10,8,8) $&
$(8,11,8,8,9)  $&
$(8,11,11,6,8) $\\&
$(9,11,6,10,8) $&
$(10,11,10,3,10)$&
$(10,11,10,6,7) $&
$(10,11,8,5,10) $\\&
$(10,11,9,8,6)  $&
$(10,11,11,10,2)$&
$(11,11,6,10,6) $&
$(11,11,8,10,4) $
\\
\hline
 $(48,36)$ &
$(5,12,12,11,8)$&
$(8,12,9,8,11) $&
$(8,12,10,11,7)$&
$(9,12,9,12,6) $\\&
$(9,12,10,9,8) $&
$(9,12,10,10,7)$&
$(9,12,11,6,10)$&
$(10,12,6,9,11)$\\&
$(10,12,11,5,10)$&
$(11,12,1,12,12)$&
$(11,12,2,11,12)$&
$(11,12,6,8,11) $\\&
$(11,12,7,10,8) $&
$(11,12,10,12,3)$&
$(11,12,11,4,10)$&
$(12,12,3,9,12) $\\&
$(12,12,4,9,11) $&
$(12,12,5,7,12) $&
$(12,12,6,7,11) $&
$(12,12,7,9,8)  $\\&
$(12,12,9,10,5) $&
$(12,12,10,7,7) $\\
\noalign{\hrule height0.8pt}
\end{tabular}
}
\end{center}
\end{table}

For each $C$ of the quaternary codes listed in Table~\ref{Tab:2d}, 
there is a vector $(m_1,m_2,\ldots,,m_5) \in \ZZ_{\ge 0}^{5}$ with
$C_{2}(m) \cong C$.
For
\[
(n,d)=
(32, 24),
(36, 27),
(40, 30),
(44, 33),
(48, 36),
\]
the corresponding
vectors $(m_1,m_2,\ldots,m_5)$ are listed in Table~\ref{Tab:m}.
For the codes listed in Table~\ref{Tab:2d}, 
let $V_{5s+t}$ be the set of the corresponding vectors
$(m_1,m_2,\ldots,m_5)$
for the inequivalent quaternary
near-optimal Hermitian LCD $[5s+t,2,d_4(5s+t,2)-1]$ codes
with dual distances $d^\perp\ge 2$.
We verified that
$V_{5(s-1)+t}$ is obtained as:
\[
\{(m_1-1,m_2-1,\ldots,m_5-1)\in\ZZ_{\ge 0}^5 \mid
(m_1,m_2,\ldots,m_5) \in V_{5s+t} \}
\]
for $s \le s'$.
The corresponding
vectors $(m_1,m_2,\ldots,,m_5)$ can be also obtained electronically from 
\url{http://www.math.is.tohoku.ac.jp/~mharada/qLCD/}.

\subsection{Quaternary optimal Hermitian $[n,3]$ LCD codes}

The largest minimum weights $d_4(n,3)$ were determined in~\cite{AHS},
where $d_4(n,3)$ are listed in Table~\ref{Tab:4-3}.
In this subsection, we complete
a classification of quaternary optimal Hermitian LCD codes of dimension $3$.

We apply Theorem~\ref{thm:main2} to $k=3$.
For $n \ge 3$,
write $n=21s+t$, where $s \in \ZZ_{\ge 0}$ and $t \in \{0,1,\ldots,20\}$.
Let $r=r_{21s+t,3,d_4(21s+t,3)}$ and
$s'=s'_{21s+t,3,d_4(21s+t,3)}$ be the integers defined
in~\eqref{eq:r} and~\eqref{eq:s0}, respectively.
For each $21s+t$, we list $d_4(21s+t,3)$, $s'$ and $r$ in Table~\ref{Tab:4-3}.
Then $d_4(21s+t,3)$ is written as 
$16s+\alpha(t)$, where $\alpha(t)$ is a constant depending on only $t$.
Since $d_4(21s+t,3)$  satisfies 
the assumption~\eqref{eq:as} in Theorem~\ref{thm:main2},
we have the following:

\begin{prop}\label{prop:3}
There is a one-to-one correspondence between
equivalence classes of quaternary Hermitian LCD
$[4r,3,3r]$ codes
with dual distances $d^\perp \ge 2$
 and equivalence classes of quaternary Hermitian LCD
 $[21s+t,3,d_4(21s+t,3)]$ code
with dual distances $d^\perp \ge 2$ for every integer
$s \ge s'$.
\end{prop}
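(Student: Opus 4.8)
The plan is to obtain Proposition~\ref{prop:3} as the $k=3$ instance of Theorem~\ref{thm:main2} (so that $\frac{4^k-1}{3}=21$ and $4^{k-1}=16$), in exactly the way Proposition~\ref{prop:2} was deduced for $k=2$. Thus the only thing to do is to check that the hypotheses of Theorem~\ref{thm:main2} hold here: that $n=21s+t\ge 3=k$, that the largest minimum weight $d_4(21s+t,3)$ has the shape $16s+\alpha(t)$ demanded by \eqref{eq:as}, and that $4r\ge k=3$, where $r=r_{21s+t,3,d_4(21s+t,3)}$ and $s'=s'_{21s+t,3,d_4(21s+t,3)}$ are the integers of \eqref{eq:r} and \eqref{eq:s0}.

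For the weight condition I would invoke the determination of $d_4(n,3)$ recalled in the introduction (from \cite{AHS} and \cite{LLGF}): for $n\ge 6$ one has $d_4(n,3)=\lfloor 16n/21\rfloor$ when $n\equiv 5,9,13,17,18\pmod{21}$ and $d_4(n,3)=\lfloor 16n/21\rfloor-1$ otherwise. Writing $n=21s+t$ with $t\in\{0,1,\dots,20\}$ gives $\lfloor 16n/21\rfloor=16s+\lfloor 16t/21\rfloor$, hence $d_4(21s+t,3)=16s+\alpha(t)$ with $\alpha(t)=\lfloor 16t/21\rfloor$ for $t\in\{5,9,13,17,18\}$ and $\alpha(t)=\lfloor 16t/21\rfloor-1$ otherwise. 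In every case $\alpha(t)$ depends only on $t$, which is precisely the assumption \eqref{eq:as}; these values of $\alpha(t)$, together with the resulting $r$ and $s'$, are the ones tabulated in Table~\ref{Tab:4-3}.

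It then remains to verify $4r\ge 3$, i.e.\ $r\ge 1$. By \eqref{eq:r2} we have $r=16t-21\alpha(t)$, which by Lemma~\ref{lem:r} is independent of $s$, so this is a finite check over the $21$ residues $t\in\{0,1,\dots,20\}$: one simply confirms that every entry of the $r$-column of Table~\ref{Tab:4-3} is at least $1$ (indeed $21\alpha(t)\le 16t$, with equality only for $t=0$, where $\alpha(0)=-1$ forces $r=21$). With $n\ge k$, the form \eqref{eq:as}, and $4r\ge k\ge 2$ all in hand, Theorem~\ref{thm:main2} delivers at once the asserted one-to-one correspondence between equivalence classes of quaternary Hermitian LCD $[4r,3,3r]$ codes with $d^\perp\ge 2$ and equivalence classes of quaternary Hermitian LCD $[21s+t,3,d_4(21s+t,3)]$ codes with $d^\perp\ge 2$, for every integer $s\ge s'$. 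I expect no serious obstacle here: the substance lives entirely in Theorem~\ref{thm:main2}, and what remains is bookkeeping; the only point needing mild care is that the closed formula for $d_4(n,3)$ is stated only for $n\ge 6$, but since we work with $k=3$ and claim the correspondence only for $s\ge s'$, no genuinely exceptional small length intrudes.
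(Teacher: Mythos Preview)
Your proposal is correct and follows exactly the paper's approach: the paper derives Proposition~\ref{prop:3} by observing that $d_4(21s+t,3)=16s+\alpha(t)$ satisfies assumption~\eqref{eq:as} and then applying Theorem~\ref{thm:main2} with $k=3$, with the values of $r$ and $s'$ read off from Table~\ref{Tab:4-3}. You are a bit more explicit than the paper in verifying the side condition $4r\ge k$, but this is the same argument.
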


\begin{table}[thbp]
\caption{$d_4(n,3)$, $s'$ and $r$}
\label{Tab:4-3}
\begin{center}
{\small
\begin{tabular}{c|c|c|c||c|c|c|c}
\noalign{\hrule height0.8pt}
 $n$ & $d_4(n,3)$&$s'$  & $r$ &
 $n$ & $d_4(n,3)$&$s'$  & $r$ \\
\hline
 $21s$   & $ 16s-1 $&$5$  & 21 & $21s+11$& $ 16s+7 $&$6$  & 29 \\
 $21s+1$ & $ 16s-1 $&$8$  & 37 & $21s+12$& $ 16s+8 $&$5$  & 24 \\
 $21s+2$ & $ 16s   $&$7$  & 32 & $21s+13$& $ 16s+9 $&$4$  & 19 \\
 $21s+3$ & $ 16s+1 $&$6$  & 27 & $21s+14$& $ 16s+9 $&$7$  & 35 \\
 $21s+4$ & $ 16s+2 $&$5$  & 22 & $21s+15$ &$ 16s+10$&$6$  & 30 \\
 $21s+5$ & $ 16s+3 $&$4$  & 17 & $21s+16$ &$ 16s+11$&$5$  & 25 \\
 $21s+6$ & $ 16s+3 $&$7$  & 33 & $21s+17$ &$ 16s+12$&$4$  & 20 \\
 $21s+7$ & $ 16s+4 $&$6$  & 28 & $21s+18$ &$ 16s+13$&$3$  & 15 \\
 $21s+8$ & $ 16s+5 $&$5$  & 23 & $21s+19$ &$ 16s+13$&$6$  & 31 \\
 $21s+9$ & $ 16s+6 $&$4$  & 18 & $21s+20$& $ 16s+14$&$5$  & 26 \\
 $21s+10$& $ 16s+6 $&$7$  & 34 & &&&\\
 \noalign{\hrule height0.8pt}
\end{tabular}
}
\end{center}
\end{table}

\begin{table}[thbp]
\caption{$N_4(n,3)$}
\label{Tab:3}
\begin{center}
{\footnotesize
\begin{tabular}{c|l|lll}
 \noalign{\hrule height0.8pt}
 $n$ & \multicolumn{1}{c|}{$N_4(4r,3)$}
 & \multicolumn{3}{c}{$N_4(21s+t,3)$ $(s < s')$} \\
\hline
$21s$    & $N_4(84,3)=7$ & $N_4(21,3)=5$ &$N_4(42,3)=7$  &$N_4(63,3)=7$\\
\hline
$21s+1$&$N_4(148,3)=12808$&$N_4(22,3)=1871$&$N_4(43,3)=9793$&$N_4(64,3)=12405$\\
&&$N_4(85,3)=12781$&$N_4(106,3)=12808$&$N_4(127,3)=12808$\\
\hline
$21s+2$&$N_4(128,3)=318$&$N_4(23,3)=135$&$N_4(44,3)=288$&$N_4(65,3)=318$\\
&&$N_4(86,3)=318$&$N_4(107,3)=318$\\
\hline
$21s+3$&$N_4(108,3)=147$&$N_4(24,3)=73$&$N_4(45,3)=138$&$N_4(66,3)=147$\\
&&$N_4(87,3)=147$\\
\hline
$21s+4$  & $N_4(88,3)=4$ & $N_4(4,3)=0$  & $N_4(25,3)=4$ &$N_4(46,3)=4$ \\
         &               &$N_4(67,3)=4$&&\\
\hline
$21s+5$  & $N_4(68,3)=1$ & $N_4(5,3)=0$  &$N_4(26,3)=1$ &$N_4(47,3)=1$\\
\hline
$21s+6$&$N_4(132,3)=2162$&$N_4(6,3)=2$&$N_4(27,3)=937$&$N_4(48,3)=1948$\\
&&$N_4(69,3)=2145$&$N_4(90,3)=2162$&$N_4(111,3)=2162$\\
\hline
$21s+7$&$N_4(112,3)=44$&$N_4(7,3)=1$&$N_4(28,3)=30$&$N_4(49,3)=44$\\
&&$N_4(70,3)=44$&$N_4(91,3)=44$\\
\hline
$21s+8$  & $N_4(92,3)=23$& $N_4(8,3)=1$  &$N_4(29,3)=18$ &$N_4(50,3)=23$\\
         &               &$N_4(71,3)=23$&&\\
\hline
$21s+9$  & $N_4(72,3)=1$ & $N_4(9,3)=1$  &$N_4(30,3)=1$ &$N_4(51,3)=1$\\
\hline
$21s+10$&$N_4(136,3)=947$&$N_4(10,3)=13$&$N_4(31,3)=589$&$N_4(52,3)=889$\\
&&$N_4(73,4)=947$&$N_4(94,3)=947$&$N_4(115,3)=947$\\
\hline
 $21s+11$&$N_4(116,3)=318$&$N_4(11,3)=13$&$N_4(32,3)=220$&$N_4(53,4)=309$\\
 &&$N_4(74,3)=318$&$N_4(95,3)=318$\\
\hline
$21s+12$ & $N_4(96,3)=7$  & $N_4(12,3)=2$& $N_4(33,3)=7$	 &$N_4(54,3)=7$\\
         &               &$N_4(75,3)=7$&&\\
\hline
$21s+13$ & $N_4(76,3)=4$ & $N_4(13,3)=2$ &$N_4(34,3)=4$ &$N_4(55,3)=4$\\
\hline
$21s+14$&$N_4(140,3)=5736$&$N_4(14,3)=156$&$N_4(35,3)=3562$&$N_4(56,3)=5398$\\
&&$N_4(77,3)=5709$&$N_4(98,3)=5736$&$N_4(119,3)=5736$\\
\hline
$21s+15$&$N_4(120,3)=147$&$N_4(15,3)=28$&$N_4(36,3)=118$&$N_4(57,3)=147$\\
&&$N_4(78,3)=147$&$N_4(99,3)=147$\\
\hline
$21s+16$ & $N_4(100,3)=44$ & $N_4(16,3)=10$& $N_4(37,3)=39$	 &$N_4(58,3)=44$\\
         &               &$N_4(79,3)=44$&&\\
\hline
$21s+17$ & $N_4(80,3)=1$ & $N_4(17,3)=1$ &$N_4(38,3)=1$ &$N_4(59,3)=1$\\
\hline
$21s+18$ & $N_4(60,3)=1$ & $N_4(18,3)=1$ &$N_4(39,3)=1$\\
\hline
$21s+19$&$N_4(124,3)=947$&$N_4(19,3)=196$&$N_4(40,3)=774$&$N_4(61,3)=930$\\
&&$N_4(82,3)=947$&$N_4(103,3)=947$\\
\hline
$21s+20$&$N_4(104,3)=23$&$N_4(20,3)=10$&$N_4(41,3)=23$&$N_4(62,3)=23$\\
&&$N_4(83,3)=23$ \\
 \noalign{\hrule height0.8pt}
\end{tabular}
}
\end{center}
\end{table}

By the method given in Section~\ref{sec:M},
our computer search completed a classification of 
all quaternary optimal Hermitian LCD $[4r,3,3r]$ codes
with dual distances $d^\perp \ge 2$
for $r$ listed in Table~\ref{Tab:4-3}.
The numbers $N_4(4r,3)$ of all inequivalent
quaternary optimal Hermitian LCD $[4r,3,3r]$ codes
with dual distances $d^\perp \ge 2$ are listed in Table~\ref{Tab:3}.
In addition,
our computer search completed a classification of 
all quaternary optimal Hermitian LCD $[21s+t,3,d_4(21s+t,3)]$ codes
with dual distances $d^\perp \ge 2$
for $s <s'$, where $s'$ is given in Table~\ref{Tab:4-3}.
The numbers $N_4(21s+t,3)$ of all inequivalent
quaternary optimal Hermitian LCD $[21s+t,3,d_4(21s+t,3)]$ codes
with dual distances $d^\perp \ge 2$ are also listed in Table~\ref{Tab:3}.
From Proposition~\ref{prop:3} and Table~\ref{Tab:3},
we have the following:

\begin{thm}\label{thm:4-3-2}
\begin{enumerate}
\item 
Suppose that $t \in\{0,12\}$.
Then there are $7$ inequivalent quaternary optimal Hermitian LCD $[21s+t,3]$
codes with dual distances $d^\perp\ge 2$
for every integer $s \ge 2$ if $t=0$ and  $s \ge 1$ if $t=12$.

\item 
There are $12808$ inequivalent quaternary optimal Hermitian LCD $[21s+1,3]$
codes with dual distances $d^\perp\ge 2$
for every integer $s \ge 5$.
\item 
Suppose that $t \in\{2,11\}$.
Then there are $318$ inequivalent quaternary optimal Hermitian LCD $[21s+t,3]$
codes with dual distances $d^\perp\ge 2$
for every integer $s \ge 3$.
\item 
Suppose that $t \in \{3,15\}$.
Then there are $147$ inequivalent quaternary optimal Hermitian LCD $[21s+t,3]$
codes with dual distances $d^\perp\ge 2$
for every integer $s \ge 3$ if $t=3$ and  $s \ge 2$ if $t=15$.

\item Suppose that $t \in\{4,13\}$.
Then there are $4$ inequivalent quaternary optimal Hermitian LCD $[21s+t,3]$
codes with dual distances $d^\perp\ge 2$
for every integer $s \ge 1$.

\item Suppose that $t \in\{5,9,17,18\}$.
Then there is a unique quaternary optimal Hermitian LCD $[21s+t,3]$
code with dual distance $d^\perp\ge 2$, up equivalence,
for every integer
$s \ge 1$ if $t=5$ and $s \ge 0$ if $t \in\{9,17,18\}$.

\item 
There are $2162$ inequivalent quaternary optimal Hermitian LCD $[21s+6,3]$
codes with dual distances $d^\perp\ge 2$
for every integer $s \ge 4$.

\item 
Suppose that $t \in\{7,16\}$.
Then there are $44$ inequivalent quaternary optimal Hermitian LCD $[21s+t,3]$
codes with dual distances $d^\perp\ge 2$
for every integer $s \ge 2$.
\item 
Suppose that $t \in\{8,20\}$.
Then there are $23$ inequivalent quaternary optimal Hermitian LCD $[21s+t,3]$
codes with dual distances $d^\perp\ge 2$
for every integer $s \ge 2$ if $t=8$ and  $s \ge 1$ if $t=20$.
      
\item 
Suppose that $t \in\{10, 19\}$.
Then there are $947$ inequivalent quaternary optimal Hermitian LCD $[21s+t,3]$
codes with dual distances $d^\perp\ge 2$
for every integer $s \ge 3$.
\item 
There are $5736$ inequivalent quaternary optimal Hermitian LCD $[21s+14,3]$
codes with dual distances $d^\perp\ge 2$
for every integer $s \ge 4$.

\end{enumerate}
\end{thm}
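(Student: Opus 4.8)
The plan is to treat the twenty-one residues $t\in\{0,1,\dots,20\}$ one at a time, reducing in each case the infinite family of lengths $n=21s+t$ to a single finite classification via Theorem~\ref{thm:main2}, and then supplying the remaining finitely many small lengths by a direct search. Fix $t$ and, following Table~\ref{Tab:4-3}, write $d_4(21s+t,3)=16s+\alpha(t)$ with $\alpha(t)$ depending only on $t$; this is precisely the hypothesis~\eqref{eq:as} with $k=3$, so Proposition~\ref{prop:3} applies and yields a one-to-one correspondence between the equivalence classes of quaternary Hermitian LCD $[21s+t,3,d_4(21s+t,3)]$ codes with $d^\perp\ge 2$ and those of quaternary Hermitian LCD $[4r,3,3r]$ codes with $d^\perp\ge 2$, for every $s\ge s'$, where $r$ and $s'$ are read off from Table~\ref{Tab:4-3}. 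Recalling from Theorem~\ref{thm:main2} that $4r=21(s'-1)+t$, this says that the count $N_4(21s+t,3)$ of inequivalent such codes is constant, equal to $N_4(4r,3)$, for all $s\ge s'-1$.

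Next I would carry out the finite classifications by computer using the method of Section~\ref{sec:M}. For the length $4r$: enumerate all $m=(m_1,\dots,m_{21})\in\ZZ_{\ge0}^{21}$ with $\sum_i m_i=4r$ subject to the box constraints~\eqref{eq:mi} of Lemma~\ref{lem:mi-bound} and the normalisation~\eqref{eq:k3}; for each surviving $m$ form $G_3(m)$, and retain $C_3(m)$ precisely when $G_3(m)\overline{G_3(m)}^T$ is nonsingular (computed with the NTL \texttt{determinant} routine) and $C_3(m)$ has minimum weight at least $3r$; finally reduce the resulting list modulo equivalence by computing, for each code, the digraph $\Gamma(C)$ of Section~\ref{sec:M} together with a canonical form via {\sc nauty}. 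This produces the numbers $N_4(4r,3)$ recorded in Table~\ref{Tab:3}. The identical procedure, applied directly to the parameters $[21s+t,3,d_4(21s+t,3)]$ for each $s<s'$, produces the remaining entries $N_4(21s+t,3)$ of Table~\ref{Tab:3}; a portion of this output is independently re-checked with {\sc Magma} as described in Section~\ref{sec:M}.

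It then remains to pin down the stabilisation threshold in each item. Here it is convenient to observe monotonicity: if $C_3(m)$ is a quaternary Hermitian LCD $[21s+t,3,16s+\alpha(t)]$ code with $d^\perp\ge 2$ (so $m_i\ge 1$ for all $i$), then by Lemma~\ref{lem:190228-1} the code with generator matrix $(S_3 \mid G_3(m))$ is a quaternary Hermitian LCD $[21(s+1)+t,3,16(s+1)+\alpha(t)]$ code, which is again optimal (the formula for $d_4(\cdot,3)$ is exact for every $s$) and again has $d^\perp\ge 2$ (the columns of $S_3$ are nonzero, so $(S_3 \mid G_3(m))$ has no zero column); by Lemma~\ref{lem:Eq} inequivalent codes map to inequivalent codes. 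Hence $N_4(21s+t,3)$ is non-decreasing in $s$, and by the first paragraph it attains its terminal value $N_4(4r,3)$ no later than $s=s'-1$. Comparing with the directly computed values in Table~\ref{Tab:3} then exhibits, for each $t$, the exact value of $s$ at which the count first reaches $N_4(4r,3)$ and remains there; reading these off gives $7$ for $t\in\{0,12\}$, $12808$ for $t=1$, $318$ for $t\in\{2,11\}$, and so on, yielding all of (i)--(xi).

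The main obstacle is computational rather than conceptual. The two heaviest instances are $t=1$ (length $4r=148$, with $12808$ inequivalent codes) and $t=14$ (length $4r=140$, with $5736$): both require enumerating a very large number of vectors $m$ and then performing graph-isomorphism reductions on tens of thousands of survivors, so aggressive use of the bounds~\eqref{eq:mi} for pruning the enumeration and of {\sc nauty} canonical forms for deduplication is essential to make the search feasible. The remaining residues are routine by comparison.
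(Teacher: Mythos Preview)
Your approach is essentially the same as the paper's: invoke Proposition~\ref{prop:3} (i.e., Theorem~\ref{thm:main2} with $k=3$) to obtain the one-to-one correspondence for $s\ge s'$, carry out the finite classifications at $[4r,3,3r]$ and at the small lengths $s<s'$ by the computer method of Section~\ref{sec:M}, and read the stabilisation thresholds off Table~\ref{Tab:3}. Your explicit monotonicity argument via Lemmas~\ref{lem:190228-1} and~\ref{lem:Eq} is a correct extra touch that the paper leaves implicit (it simply tabulates every $s<s'$ directly); the only slip is the parenthetical ``so $m_i\ge1$ for all $i$'', which does not follow from $d^\perp\ge2$ alone (the columns $h_{3,i}$ are already nonzero), but since you never use it the argument stands.
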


For each $C$ of the quaternary codes listed in Table~\ref{Tab:3}, 
there is a vector $m \in \ZZ_{\ge 0}^{21}$ with $C_{3}(m) \cong C$.
The vectors $m$ can be obtained electronically from 
\url{http://www.math.is.tohoku.ac.jp/~mharada/qLCD/}.
For the codes listed in Table~\ref{Tab:3}, 
let $V_{21s+t}$ be the set of the corresponding vectors
$(m_1,m_2,\ldots,m_{21})$ for
the inequivalent quaternary optimal Hermitian LCD $[21s+t,3]$
codes with dual distances $d^\perp\ge 2$.
We verified that
$V_{21(s-1)+t}$ is obtained as:
\[
\{(m_1-1,m_2-1,\ldots,m_{21}-1)\in\ZZ_{\ge 0}^{21} \mid
(m_1,m_2,\ldots,m_{21}) \in V_{21s+t} \}
\]
for $s  \le s'$.

\bigskip
\noindent
{\bf Acknowledgments.}
The authors would like to thank Ken Saito for useful discussions.
This work was supported by JSPS KAKENHI Grant Number 19H01802.

\end{document}